\def\BibTeX{{\rm B\kern-.05em{\sc i\kern-.025em b}\kern-.08em
    T\kern-.1667em\lower.7ex\hbox{E}\kern-.125emX}}
\newtheorem{assumption}{Assumption}
\newtheorem{theorem}{Theorem}
\newtheorem{remark}{Remark}
\newtheorem{lemma}{Lemma}
\newtheorem{corollary}{Corollary}
\newcommand{\aresref}[1]{ARES}
\newcommand{\setupautartappendix}{%
  \counterwithin{equation}{section}
  \renewcommand{\theequation}{\Alph{section}.\arabic{equation}}
  \renewcommand{\thesubsection}{\Alph{section}.\arabic{subsection}}

  \titleformat{\section}[block]
    {\normalfont\large\bfseries}
    {\Alph{section}}{1em}{}
  \titleformat{\subsection}[block]
    {\normalfont\normalsize\itshape}
    {\Alph{section}.\arabic{subsection}.}{0.6em}{}
}
\begin{document}
\title{An Adaptive Method for Contextual Stochastic Multi-armed Bandits with Rewards Generated by a Linear Dynamical System}
\author{Jonathan Gornet, \IEEEmembership{Student Member, IEEE}, Mehdi Hosseinzadeh, \IEEEmembership{Member, IEEE},\\ and Bruno Sinopoli, \IEEEmembership{Fellow, IEEE}%
\thanks{This research has been supported by Army Research Office under award number W911NF-20-S-0009.}
\thanks{J. Gornet and B. Sinopoli are with the Department of Electrical and Systems Engineering, Washington University in St. Louis, St. Louis, MO 63130, USA (email: jonathan.gornet@wustl.edu; bsinopoli@wustl.edu).}
\thanks{M. Hosseinzadeh is with the School of Mechanical and Materials Engineering, Washington State University, Pullman, WA 99164, USA (email: mehdi.hosseinzadeh@wsu.edu).}
}

\maketitle

\begin{abstract}
Online decision-making can be formulated as the popular stochastic multi-armed bandit problem where a learner makes decisions (or takes actions) to maximize cumulative rewards collected from an unknown environment. This paper proposes to model a stochastic multi-armed bandit as an unknown linear Gaussian dynamical system, as many applications, such as bandits for dynamic pricing problems or hyperparameter selection for machine learning models, can benefit from this perspective. Following this approach, we can build a matrix representation of the system's steady-state Kalman filter that takes a set of previously collected observations from a time interval of length $s$ to predict the next reward that will be returned for each action. This paper proposes a solution in which the parameter $s$ is determined via an adaptive algorithm by analyzing the model uncertainty of the matrix representation. This algorithm helps the learner adaptively adjust its model size and its length of exploration based on the uncertainty of its environmental model. The effectiveness of the proposed scheme is demonstrated through extensive numerical studies, revealing that the proposed scheme is capable of increasing the rate of collected cumulative rewards.
\end{abstract}


\section{Introduction}
The stochastic multi-armed bandit problem focuses on modeling online learning processes to understand decision-making under uncertainty with the aim of providing a simplified framework to situations encountered frequently. The problem considers a learner interacting with an environment, where the learner chooses an action for which the environment in return gives a reward that is sampled from a probability distribution. Performance is usually measured in terms of cumulative regret, defined as the cumulative difference between the highest reward that could be given for a round and the reward given for the chosen action. Since the learner typically lacks prior knowledge about the environment \cite{lattimore2020bandit}, the learner needs to balance between exploring each action to extract information and committing to an action to maximize cumulative reward. A popular adaptive exploration method that has been introduced where the distributions are assumed to be stationary is the Upper Confidence Bound (UCB) method \cite{agrawal1995sample}.

In the case of non-stationary stochastic multi-armed bandits, the distributions of the rewards can change over time \cite{lattimore2020bandit}. Therefore, learners need to understand how its previous information of the environment may misrepresent the current reward distributions. Most of the existing methods have focused on \textit{piece-wise stationary} cases, where the distribution shifts abruptly \cite{garivier2008upper}. In this context, presented algorithms are commonly approached as a forgetting versus memorizing trade-off, where either the estimates of the reward distributions are forgotten when a distributional shift is detected \cite{hartland2007change,liu2018change,cao2019nearly,mellor2013thompson} or the estimates of the reward distributions use measurements that are within a sliding window \cite{garivier2008upper}. Some non-stationary methods focused on the impact of distributional change in the form of \textit{variational budget}, which is a known constant that bounds the cumulative changes in the distributions \cite{wei2021nonstationary,besbes2014stochastic}. In contrast to piece-wise stationary cases, the distribution may change slowly, which are referred as slowly-varying multi-armed bandits \cite{wei2018abruptly}. For the slowly-varying multi-armed bandits, \cite{slivkins2008adapting} modeled rewards of each arm as Brownian motion. This approach has been extended to the scenarios where the rewards are generated by a partially observable stochastic linear dynamical system \cite{9993316}.

Another variation of the stochastic multi-armed bandit is the restless bandit, introduced in 1988 \cite{whittle1988restless}. In the restless bandit, each action has its own Markov chain that usually consists of discrete states. For every action the learner chooses, the learner observes the Markov chain's state variable and the reward, which is a function of the state variable. Previous results in the discrete state-space Markov chain that use an approach similar to UCB are \cite{liu2012learning,tekin2012online,ortner2014regret,dai2011non,wang2020restless}. Authors in \cite{jung2019regret} use Thompson sampling, i.e., sampling parameters based on \textit{a priori} distribution of Markov chain, for action selection. To the best of our knowledge, \cite{wang2020restless} is the most recent result. In the work, the algorithm Restless-UCB first learns the Markov chain for each action for a set interval. After the learning interval phase, Restless-UCB then searches for the optimal action sequence using the learned Markov chain perturbed by the magnitude of its model error. 

The last variation of the stochastic multi-armed bandit problem relevant to this paper is the stochastic contextual bandit. In the contextual bandit, the learner observes a context in addition to the reward, where it is assumed that the reward is a function of the action and of the observed context \cite{lattimore2020bandit}. A specific version of the stochastic contextual bandit is the stochastic linear bandit where the reward is a linear combination of an unknown linear parameter and a known action vector \cite{abe1999associative}. In this context, \cite{NIPS2011_e1d5be1c} uses an optimism-based exploration strategy for identifying the optimal action. The main focus in \cite{bastani2021mostly} is an analysis of exploration-free greedy methods, i.e. directly sampling actions based on what the learner expects to return the highest reward, in comparison to optimism-based exploration. An algorithm developed in \cite{kuroki2024best} addresses environments where the rewards and contexts can either be stochastic or adversarial (adversarial is when the environment chooses the sequence of rewards and contexts prior to its interaction with the learner). In some cases, the context may be partially observed. In this setting,  \cite{parkbalancing} uses a Thompson sampling method by sampling the unknown linear parameter from a prior probability distribution. 

For this work, we consider the rewards generated by a Linear Gaussian Dynamical System (LGDS), and we envision it can be utilized to address challenges in many applications, including dynamic pricing problem \cite{mueller2019low,agrawal2021dynamic}, controller selection for time-varying stochastic linear dynamical system  \cite{pmlr-v211-gradu23a,NEURIPS2021_856b503e}, opportunistic spectrum access in wireless networks \cite{tekin2011online,tekin2012online}, portfolio optimization \cite{shen2015portfolio,huo2017risk,gornet2022stochastic}, and hyperparameter optimization \cite{li2017hyperband,gornet2025hypercontrollerhyperparametercontrollerfast} or algorithm selection \cite{gagliolo2010algorithm} for machine learning models. Each of the applications above can be modeled as a dynamical system as previous efforts have made a similar assumption such as \cite{agrawal2021dynamic} for dynamic pricing, \cite{parker2020provably,gornet2025hypercontrollerhyperparametercontrollerfast} for hyperparameter optimization, and \cite{pmlr-v211-gradu23a} for controller selection. Finally, many of the examples above have a context available, i.e. the computed loss from a cost function in a machine learning model can be viewed as a context.  

Building upon the authors' previous work \cite{9993316}, this paper develops an adaptive algorithm that maximizes cumulative rewards sampled from an unknown stochastic linear dynamical system. The key difference in our proposed environment in comparison to the restless bandit and stochastic linear bandit are as follows. First, the LGDS, which is a Markovian system, has a continuous state-space in comparison to the discrete state-space in restless bandits. Second, although we assume that the context is a partial observation of the LGDS state which shares similarities with previous literature in stochastic linear bandits, the difference is that the contexts are also time correlated.

In this paper, we develop a method that adaptively chooses the number of previous observations to be taken into account, and the length of exploration based on the uncertainty of the model that the learner has learned for the environment. This method allows the learner explore the environment adaptively. This addresses the weakness of Phased Initial Exploration of the System (PIES) \cite{9993316}, where the window size and exploration length of the environment are fixed, which can lead to a situation where the exploration length is longer than the horizon length. Therefore, this paper provides a perspective on how learners should integrate model selection with its decisions.

This paper is organized as follows. Section \ref{sec:Problem_Framework} states the problem. In Section \ref{sec:Modeling_the_System}, a methodology to model and predict rewards is presented. Section \ref{sec:controlling_model_error} discusses how model size and its errors can impact performance. Section \ref{sec:Bandit_Strategy} uses the learned model to develop a strategy to maximize cumulative reward over a horizon. This section also analyzes the regret of the developed strategy. Section \ref{sec:Numerical_Simulation} has numerical results. Finally, Section \ref{sec:Conclusion} concludes the paper and provides future directions.

\smallskip
\noindent\textbf{Notation:} For any $x\in\mathbb{R}^n$ and $y\in\mathbb{R}^n$, we have the inner product $\left\langle x, y\right\rangle = x^\top y \in \mathbb{R}$. For the natural number $i \in \mathbb{N}$, we have $[i]=\{1,2,\dots,i\}$. The distribution $\mathscr{N}\left(\mu,\Sigma\right)$ is a normal distribution with a mean $\mu \in \mathbb{R}^d$ and a covariance of $\Sigma \succeq \mathbf{0} \in \mathbb{R}^{d \times d}$. The norm $\left\Vert x \right\Vert_{Q} \triangleq \sqrt{x^\top Q x}$ is the weighted $\ell$-2 norm where $Q \succ \mathbf{0}$.  


\section{Problem Statement}\label{sec:Problem_Framework}
Suppose that for $k$ given actions $c_a \in\mathscr{A} \triangleq \{c_a \in \mathbb{R}^d \mid a = 1,2,\dots,k\}$, the reward $X_t$ is sampled from the following linear Gaussian dynamical system
\begin{equation}\label{eq:Linear_System}
\begin{cases}
z_{t+1} & = \Gamma z_t + \xi_t,~z_0 \sim \mathscr{N}(\mathbf{0},\Sigma_0) \\
\theta_t & = C_\theta z_t + \phi_t \\
X_t & = \langle c_a ,z_t \rangle + \eta_t
\end{cases}, 
\end{equation}
where $z_t \in \mathbb{R}^d$ is the state of the system and $\Gamma \in \mathbb{R}^{d \times d}$ is the system state matrix. For each round $t=1,2,\dots,n$ and $n>d$, the learner observes the reward $X_t\in\mathbb{R}$ based on the chosen action and the context $\theta_t\in\mathbb{R}^m$. The context $\theta_t$ is a dynamically changing value with respect to the LGDS state variable $z_t$ that the learner always observes and its observation matrix $C_{\theta} \in \mathbb{R}^{m \times d}$ is constant. The noises $\xi_t \in \mathbb{R}^d$, $\phi_t \in \mathbb{R}^m$, and $\eta_t \in \mathbb{R}$ are i.i.d. normally distributed, i.e., $\xi_t \sim \mathscr{N}(\mu ,Q)$ with $Q\succeq\mathbf{0}$, $\phi_t \sim \mathscr{N}(\mathbf{0},R)$ with $R \succ \mathbf{0}$, and  $\eta_t \sim \mathscr{N}(0,\sigma_\eta^2)$ with $\sigma_\eta > 0$. For the noise term $\eta_t \in \mathbb{R}$, we will assume that each action $c_a \in \mathcal{A}$ has its own independently generated noise $\eta_t \in \mathbb{R}$.

\begin{assumption}
The matrices $\Gamma$, $C_\theta$, $Q$, $R$, $\Sigma_0$, vector $\mu$, and scalar $\sigma_\eta$ are unknown. The dimension $d$ is unknown, but dimension $m$ is known as it is the dimension of the observed context. Also, the vector $c_a$ is unknown. Note that for notation, given that there are $k$ vectors $c_a \in \mathscr{A}$, we denote $a \in [k]$ to be which vector $c_a$ is chosen.
\end{assumption}

\begin{assumption}\label{assum:observable}
The matrix $\Gamma$ is marginally stable for all $t \geq 0$; that is, its eigenvalues are within and on the unit circle. Also, the pair $(\Gamma,C_\theta)$ is observable and the pair $(\Gamma,Q^{1/2})$ is controllable. 
\end{assumption}

The goal of the learner is to maximize cumulative reward over a finite time horizon $n$. To do so, \textit{pseudo-regret} analysis is used. The \textit{pseudo-regret} is defined as the cumulative, over all rounds, difference between the highest reward (denoted as $X_t^\ast$) and the reward for the chosen action at time $t$, i.e.,

\vspace{-1em}
\begin{equation}\label{eq:pseudo-regret}
R_n = \sum_{t = 1}^n X_t^\ast-X_t. 
\end{equation}
\vspace{-1em}

For this paper, we will use \textit{pseudo-regret} \eqref{eq:pseudo-regret} instead regret, where regret is defined as the cumulative, over all rounds, \textit{expected} difference between the highest reward and the reward for the chosen action at time $t$, i.e.,

\vspace{-1em}
\begin{equation}\label{eq:regret}
R_n = \sum_{t = 1}^n \mathbb{E}\left[X_t^\ast-X_t\right]. 
\end{equation}
\vspace{-1em}

This allows us to consider variance in regret when designing the strategy and has been proposed as a better metric in comparison to regret \cite{lattimore2020bandit}. For the duration of the text, we will refer to \textit{pseudo-regret} as regret.  

\begin{remark}
    We will define the optimal action $a^*$ to be the action $c_a \in\mathcal{A}$ that aligns most closely with the LGDS state variable $z_t \in \mathbb{R}^d$, i.e. $a^* \triangleq \arg\max_{a \in [k]}\left\langle c_a,z_t\right\rangle$. 
\end{remark}

\section{Matrix Representation of the System}\label{sec:Modeling_the_System}
If the learner knew parameters of system  \eqref{eq:Linear_System}, then they could use the Kalman filter (which is the optimal prediction in the mean-squared sense) expressed in \eqref{eq:Kalman_Filter} to predict the state $z_t$, and consequently the reward $X_t$, for each action $a \in [k]$:
\begin{equation}\label{eq:Kalman_Filter}
\left\{
\begin{array}{ll}
\hat{z}_{t+1|t} & = \Gamma \hat{z}_{t|t} + \mu \\ P_{t+1|t} & = \Gamma P_{t|t} \Gamma^\top + Q \\
K_t & = P_{t|t-1} C_{\theta}^\top (C_{\theta} P_{t|t-1}C_{\theta}^\top + R)^{-1} \\
\hat{z}_{t|t} & = \hat{z}_{t|t-1} + K_t (\theta_t - C_{\theta} \hat{z}_{t|t-1}) \\
P_{t|t} & = P_{t|t-1} - K_t C_{\theta} P_{t|t-1} \\
\hat{X}_{t|t-1} & = \langle c_a,\hat{z}_{t|t-1}\rangle 
\end{array} 
\right.,
\end{equation}
where 
\begin{align}
    \hat{z}_{t|t} & \triangleq \mathbb{E}\left[z_t|\mathscr{F}_t\right] \nonumber \\
    P_{t|t} & \triangleq \mathbb{E}\left[(z_t-\hat{z}_{t|t})(z_t-\hat{z}_{t|t})^\top|\mathscr{F}_t\right] \nonumber, 
\end{align}
and $\mathscr{F}_t$ is the sigma algebra generated by previous contexts $\theta_0,\dots,\theta_t$. Since the pair $(\Gamma, C_\theta)$ is observable (see Assumption \ref{assum:observable}), the Kalman gain matrix $K_t$ converges in few steps \cite{tsiamis2019finite}. Therefore, the following steady-state Kalman filter can be used:
\begin{equation}\label{eq:ss_Kalman_Filter}
\left\{\begin{array}{ll}
\hat{z}_{t+1} & = \Gamma \hat{z}_{t} + \mu + \Gamma K (\theta_t - C_\theta \hat{z}_{t})\\
\hat{X}_t & = \langle c_a,\hat{z}_{t}\rangle 
\end{array}\right.,
\end{equation}
where
\begin{align}
\left\{
 \begin{array}{ll}
K & \triangleq PC_\theta^\top (C_\theta PC_\theta^\top + R)^{-1}\\
\hat{z}_t & \triangleq \hat{z}_{t|t-1}\\
\hat{X}_t & \triangleq \hat{X}_{t|t-1} = \mathbb{E}[X_t \mid \mathscr{F}_{t-1}]
\end{array} \right., \label{eq:reward_filtration}
\end{align}
with $P$ being the solution to the Riccati equation $P = \Gamma  P\Gamma ^\top + Q - \Gamma PC_\theta^\top (C_\theta PC_\theta^\top + R)^{-1} C_\theta P\Gamma^\top$.

Since system parameters are unknown, taking the inspiration from \cite{tsiamis2019finite}, we develop a method to learn the matrix representation of the Kalman filter. This matrix will be a function of the matrices and vectors defined in the steady-state Kalman filter given in \eqref{eq:ss_Kalman_Filter} and \eqref{eq:reward_filtration}. Let $s>0$ be the horizon length of how far we look into the past. From \eqref{eq:ss_Kalman_Filter}, we can express the reward prediction $\hat{X}_t$ using the observations from $t-s$ to $t-1$, as follows:
\begin{multline}\label{eq:backstep_s_steps}
    \hat{X}_t = \left\langle c_a, \left(\Gamma-\Gamma K C_{\theta}\right)^{s-1} \Gamma K \theta_{t-s} \right\rangle + \cdots \\ + \left\langle c_a, \Gamma K\theta_{t-1} \right\rangle + \sum_{\tau=1}^s \left\langle c_a, \left(\Gamma - \Gamma K C_\theta\right)^\tau \mu \right\rangle \\ + \left\langle c_a, \left(\Gamma - \Gamma K C_\theta\right)^s \hat{z}_{t-s} \right\rangle. 
\end{multline}

Since $X_t = \hat{X}_t + \varepsilon_a^t$, where $\varepsilon_a^t \in \mathbb{R}$ is defined to be 
\begin{equation}
    \varepsilon_a^t \triangleq \langle c_a, z_t - \hat{z}_t \rangle + \eta_t \sim \mathscr{N}(0,c_a^\top P c_a + \sigma_\eta^2), \label{eq:residual_error}
\end{equation}
we can use the expression of $\hat{X}_t$ given in \eqref{eq:backstep_s_steps} to express the reward $X_t$ as follows:
\begin{multline}
    X_t = \left\langle c_a, \left(\Gamma-\Gamma K C_{\theta}\right)^{s-1} \Gamma K \theta_{t-s} \right\rangle + \cdots \\ + \left\langle c_a, \Gamma K\theta_{t-1} \right\rangle + \sum_{\tau=1}^s \left\langle c_a, \left(\Gamma - \Gamma K C_\theta\right)^\tau \mu \right\rangle \\ + \left\langle c_a, \left(\Gamma - \Gamma K C_\theta\right)^s \hat{z}_{t-s} \right\rangle +  \varepsilon_a^t \nonumber. 
\end{multline}

We define vectors $G_{a}\left(s\right)$ for each $a \in [k]$ and $\Theta_t$ as follows:
\begin{align}\label{eq:G_a}
G_{a}\left(s\right) \triangleq\Bigg[&
c_a^\top\left(\Gamma-\Gamma K C_{\theta}\right)^{s-1} \Gamma K~~\cdots ~~ c_a^\top\Gamma K \nonumber \\
& \sum_{\tau = 1}^s \left\langle c_a, \left(\Gamma-\Gamma K C_{\theta}\right)^{\tau} \mu \right\rangle \Bigg]^\top \in \mathbb{R}^{(ms+1) \times 1}, \\
\Theta_{t}\triangleq \Big[&
\theta_{t-s}^\top~~\cdots~~\theta_{t-1}^\top~~1\Big]^\top \in \mathbb{R}^{(ms+1)\times 1}.
\end{align}

Using $G_{a}\left(s\right)$ and $\Theta_{t}$ defined above, the reward $X_t$ can be expressed as
\begin{align}
X_{t} & = G_{a}\left(s\right)^\top \Theta_{t} +\beta_a^t +  \varepsilon_a^t, \label{eq:Matrix_Form_2}
\end{align}
where 
\begin{equation}
\beta_a^t \triangleq \left\langle c_a, \left( \Gamma-\Gamma K C_{\theta}\right)^s \hat{z}_{t-s} \right\rangle. \label{eq:residual_error_beta}
\end{equation}

\begin{remark}
According to \eqref{eq:Matrix_Form_2}, at round $t$, given the set of contexts from round $t-s$ up to round $t-1$ (i.e., $\{\theta_{t-s},\dots,\theta_{t-1}\}$), one can predict the reward $X_t$, where the prediction accuracy is impacted by the terms $\beta_a^t$ and $\varepsilon_a^t$ given in \eqref{eq:residual_error_beta} and \eqref{eq:residual_error}, respectively. 
\end{remark}

To continue the discussion, we impose the the following assumptions.

\begin{assumption}\label{assum:G_a_Assumption}
    According to discussions in \cite{NIPS2011_e1d5be1c}, it is reasonable to assume that there exists a known constant $B_G$ such that $\Vert G_{a}\left(s\right) \Vert_2 \leq B_G$ for all $a \in [k]$, $s \in [s_N]$.
\end{assumption}

\begin{assumption}\label{assum:action_bound}
    There exists a known constant $B_c > 0$ such that $\left\Vert c_a \right\Vert_2 \leq B_c,~a\in[k]$. 

\end{assumption}

\begin{assumption}\label{assum:noise_bound}
We assume that there exists a known constant $B_R > 0$ such that for $\varepsilon_a^t$ given in \eqref{eq:residual_error}, the variance is bounded such that for any $a \in [k]$, i.e., $c_a^\top P c_a + \sigma_\eta^2 \leq B_R^2$. 
\end{assumption}

\begin{remark}\label{remark:MagnitudeofBeta}
From Assumption \ref{assum:action_bound}, we have $\left\vert\beta_a^t\right\vert\leq B_c\left\Vert\left(\Gamma - \Gamma K C_\theta\right)^s\right\Vert_2\left\Vert\hat{z}_{t-s}\right\Vert_2$. Thus, since $\Gamma - \Gamma K C_{\theta}$ is Schur by construction, $\left\Vert\left(\Gamma - \Gamma K C_\theta\right)^s\right\Vert_2$ decreases as $s$ increases which implies that $\left\vert\beta_a^t\right\vert$ decreases as $s$ increases. 
\end{remark}

The expression of the reward $X_t$ given in \eqref{eq:Matrix_Form_2} consists of unknown terms $G_a\left(s\right)$, $\beta_a^t$, and $\varepsilon_a^t$ and a known term $\Theta_t$. Since $\varepsilon_a^t$ is a zero-mean normally distributed random variable, then \eqref{eq:Matrix_Form_2} is a linear model with an unknown bias term $\beta_a^t$. Note that we will analyze the impact of the bias term later in this paper. Therefore, linear least squares can be used for identifying $G_a\left(s\right)$. Let $\mathscr{T}_{a} = \{t_1,\dots,t_{N_a}\}$ be the set of rounds that action $a \in [k]$ is chosen. Then, according to \eqref{eq:Matrix_Form_2}, the reward $X_t$ can be expressed as:
\begin{equation}\label{eq:linear_model_time_varying}
    \mathbf{X}_{\mathscr{T}_a} = G_{a}\left(s\right)^\top \mathbf{O}_{\mathscr{T}_a} + \mathbf{B}_{\mathscr{T}_a} + \mathbf{E}_{\mathscr{T}_a},
\end{equation}
where
\begin{align}
\left\{
\begin{array}{l}
\mathbf{X}_{\mathscr{T}_a} \triangleq \begin{bmatrix}
    X_{t_1} & \dots & X_{t_{N_a}}
    \end{bmatrix} \in \mathbb{R}^{1 \times N_a}\\
\mathbf{O}_{\mathscr{T}_a} \triangleq \begin{bmatrix}
    \Theta_{t_1} & \dots & \Theta_{t_{N_a}}
    \end{bmatrix} \in \mathbb{R}^{(ms+1) \times N_a}\\
\mathbf{B}_{\mathscr{T}_a} \triangleq \begin{bmatrix}
    \beta_{a}^{t_1} & \dots & \beta_{a}^{t_{N_a}}
    \end{bmatrix} \in \mathbb{R}^{1 \times N_a}\\
\mathbf{E}_{\mathscr{T}_a} \triangleq \begin{bmatrix}
    \varepsilon_{a}^{t_1} & \dots & \varepsilon_{a}^{t_{N_a}}
    \end{bmatrix} \in \mathbb{R}^{1 \times N_a}
    \end{array} 
    \right.. \label{eq:matrix_definitions}
 \end{align}

Let $V_a^t\left(s\right) =\lambda I + \mathbf{O}_{\mathscr{T}_a} \mathbf{O}_{\mathscr{T}_a}^\top$, where $\lambda > 0$, and as a result, $V_a^t\left(s\right)$ is invertible. Then, a least squares approximation of the matrix $G_{a}\left(s\right)$ can be computed as:
\begin{align}
\hat{G}_a^t\left( s\right)^\top =\mathbf{X}_{\mathscr{T}_a} \mathbf{O}_{\mathscr{T}_a}^\top V_a^t\left( s\right)^{-1}\label{eq:identify_2}, 
\end{align}
where $t$ indicates that \eqref{eq:identify_2} is the identified $G_a\left(s\right)$ at round $t$.

In comparison with the Kalman filter approach, only a finite set of size $s$ of previous contexts are required to predict the reward $X_t$. However, the predictions will be biased by the term $\beta_a^t$  whose magnitude depends on the value $s$ and the matrix $\Gamma - \Gamma K C_\theta$.

\begin{remark}
    The sigma algebra $\mathscr{F}_t$ ignores the sequence of rewards $X_0,\dots,X_t$ on the state prediction $\hat{z}_t$. However, if we consider adding the sequence of the rewards $X_0,\dots,X_t$ and the sequence of contexts $\theta_0,\dots,\theta_t$ to the sigma algebra $\mathscr{F}_t$, then the steady-state Kalman filter does not exist, as the error covariance matrix does not converge since the action $c_a \in \mathcal{A}$ changes each round $t$. The complexity of this problem is reviewed in \cite{gornet2024restless}. 
\end{remark}

\section{Controlling Model Error}\label{sec:controlling_model_error}

Assuming that the reward $X_t$ is sampled from system \eqref{eq:linear_model_time_varying}, the learner will use \eqref{eq:identify_2} to predict which action $a \in [k]$ will return the highest reward $X_t$. The learner needs to consider the accuracy of the predictions, as the bias term $\beta_a^t$ may have a significant impact on the predictions obtained via \eqref{eq:identify_2}.

The accuracy of the reward prediction is impacted by the number of times an action $a \in [k]$ has been chosen denoted as $N_a$ (as discussed in \cite{tsiamis2019finite}, the larger the sample set, the lower the model error) and the window length $s>0$ (as discussed in Remark \ref{remark:MagnitudeofBeta}); see Fig. \ref{figure:parameter_values} for an illustration of $N_a$ and $s$. Therefore, to optimize the accuracy of the reward prediction, we propose to control the values $N_a$ and $s$, which will be discussed next.

\begin{figure}[ht]
\centering
\includegraphics[width=\linewidth]{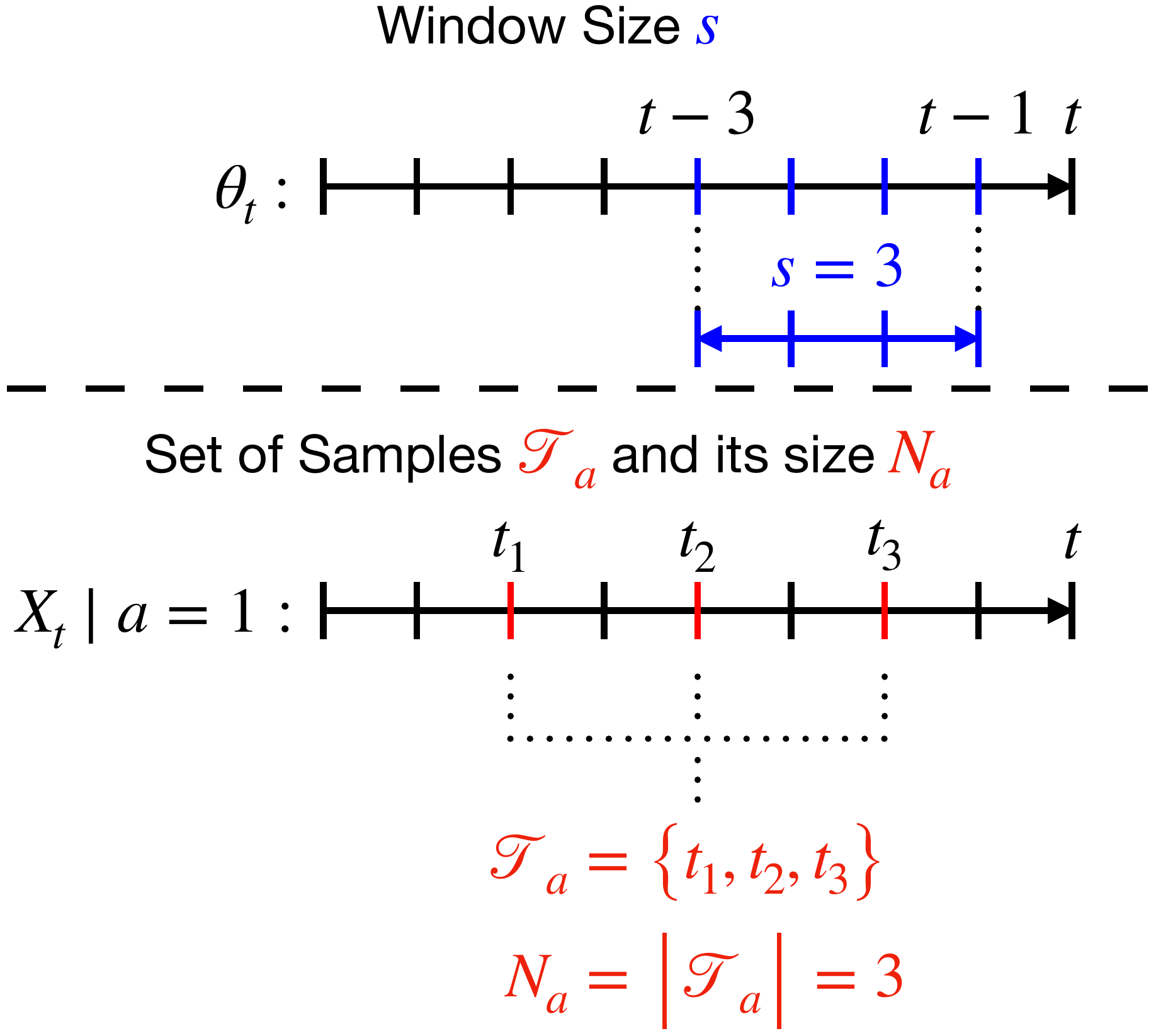}
\caption{The parameter $s$ is the sliding window size of contexts $\theta_{t-s}$ to $\theta_{t-1}$ that are used for predicting the reward $X_t$. For example, if $s = 3$, then at round $t$ the learner uses contexts $\theta_{t-3}$ to $\theta_{t-1}$ for predicting the reward $X_t$. The set $\mathscr{T}_a$ has all the rounds $t_i$ where action $a \in [k]$ is chosen and $N_a$ is the number of times action $a \in [k]$ that has been chosen by the learner. For example, if the learner choose action $a = 1$ at rounds $t_1$, $t_2$ and $t_3$, then set $\mathscr{T}_1= \{t_1,t_2,t_3\}$ and $N_1 = 3$.}
\label{figure:parameter_values}
\end{figure}

\begin{remark}
We will control the parameters $N_a$ and $s$ separately in a hierarchical manner. We leave the study of their possible impacts on each other to future work. 
\end{remark}

\subsection{Perturbation Method}\label{sec:UCB}
Taking inspiration from \cite{agrawal1995sample}, 
this subsection provides a way of controlling $N_a$ by using a perturbation signal $u_a$. 

The learner chooses actions $a \in [k]$ with the highest reward prediction. Due to model error, the learner may constantly choose a sub-optimal action $a \in [k]$ without exploring other actions. To prevent this situation, a perturbation input signal $u_a \in \mathbb{R}$ is added such that the action the learner chooses is based on the following optimization problem
\begin{equation}\label{eq:perturbation_method}
    \max_{a \in [k]} \,\hat{G}_a^t\left( s\right)^\top \Theta_t + u_a.
\end{equation}

We will design $u_a$ such that the regret is optimized. To understand how the signal $u_a$ impacts performance, we will consider the probability of choosing the optimal action $a_t^*$ using $u_a$; the following steps will be used for understanding the impact: \\
\noindent\textbf{$\bullet$~STEP 1:} Providing a bound on the model error
\begin{equation*}
    \left\Vert \hat{G}_a^t\left(s\right) - G_{a}\left(s\right) \right\Vert_{V_a^t\left(s\right)}.
\end{equation*}
This step provides Theorem \ref{theorem:model_error} which is written below. All proofs can be found in the appendix of our Arxiv version \cite{arXiv_proofs}. 
\begin{theorem}\label{theorem:model_error}
Let $\hat{G}_a^t\left(s\right)$ be identified as in \eqref{eq:identify_2} and the reward $\mathbf{X}_{\mathscr{T}_a}$ is as in \eqref{eq:linear_model_time_varying}. Then, there exists $\delta\in\left(0,1/2\right)$ such that the following inequality is satisfied with a probability of at least $1-2\delta$:
\begin{equation}
\left\Vert \hat{G}_a^t \left( s\right) - G_{a}\left(s\right) \right\Vert_{V_a^t\left(s\right)} \leq b_a^t \left(\delta,s\right), \nonumber
\end{equation}
where 
\begin{multline}\label{eq:probabilistic_bound}
b_a^t \left(\delta,s\right) \triangleq  \sqrt{2B_R^2\log\left(\frac{1}{\delta}\frac{ \det(V_a^t\left(s\right))^{1/2}}{\det(\lambda I)^{1/2}}\right)} \\ 
+ \sqrt{N_a  g_{\Sigma_{z_t}}\left(\delta/N_a\right)} B_{\beta}\left(s\right)\sqrt{\mbox{tr}\left(I - \lambda V_a^t\left(s\right)^{-1}\right)}  \\ 
+ \lambda B_G \sqrt{\mbox{tr}\left(V_a^t\left(s\right)^{-1}\right)} . 
\end{multline}
where $g_{\Sigma_{z_t}}\left(\delta\right)$ is defined to be
\begin{multline}
    g_{\Sigma_{z_t}}\left(\delta\right) \triangleq \mbox{tr}\left(\Sigma_{z_t} \right) + \\ \max \left\{\sqrt{\frac{K^4\left\Vert \Sigma_{z_t} \right\Vert_{HS}^2}{\tilde{c}}\log\left(\frac{2}{\delta}\right)}, 
        \frac{K^2\left\Vert \Sigma_{z_t} \right\Vert_2}{\tilde{c}}\log\left(\frac{2}{\delta}\right)\right\} \nonumber, 
\end{multline}
$\tilde{c} > 0$ is an absolute constant and $K = \sqrt{\left(1/2\right)\log\left(2\right)}$. The term $B_{\beta}\left(s\right)$ is defined to be 
\begin{equation}\label{eq:B_beta_def}
    B_{\beta}\left(s\right)\triangleq B_c \left\Vert \left(\Gamma - \Gamma K C_\theta\right)^s\right\Vert_2. 
\end{equation}
\end{theorem}

The derivations and discussions for this bound are found in Appendix \ref{appendix:step_1}. \\
\noindent\textbf{$\bullet$~STEP 2:} Providing a bound on the error  
\begin{equation*}
    \left\vert\hat{G}_a^t\left(s\right)^\top \Theta_t - G_{a}\left(s\right)^\top \Theta_t\right\vert. 
\end{equation*}

The bound for the prediction error above is Lemma \ref{lemma:optimal_value} which is written below. 

\begin{lemma}\label{lemma:optimal_value}
There exists $\delta\in\left(0,1/2\right)$ such that the following inequality holds with a probability of at least $1-2\delta$:
\begin{multline}\label{eq:InequalityJon}
\left\vert\hat{G}_a^t\left(s\right)^\top \Theta_t - G_{a}\left(s\right)^\top \Theta_t\right\vert\\
\leq b_a^t\left(\delta,s\right)\sqrt{\Theta_{t}^\top V_a^t\left(s\right)^{-1} \Theta_{t}}.
\end{multline} 
\end{lemma}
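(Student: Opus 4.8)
The plan is to reduce this statement to a deterministic Cauchy--Schwarz inequality combined with the high-probability model-error bound already established in Theorem \ref{theorem:model_error}. First I would rewrite the quantity on the left-hand side as the inner product of the model-error vector with the context vector, namely
\begin{equation}
\hat{G}_a^t\left(s\right)^\top \Theta_t - G_{a}\left(s\right)^\top \Theta_t = \left\langle \hat{G}_a^t\left(s\right) - G_{a}\left(s\right),\, \Theta_t \right\rangle. \nonumber
\end{equation}
This recasts the error in the predicted reward as a linear functional of the parameter-estimation error, which is exactly the object controlled by Theorem \ref{theorem:model_error}.

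Next I would apply the weighted Cauchy--Schwarz inequality that exploits the duality between the $V_a^t\left(s\right)$-norm and the $V_a^t\left(s\right)^{-1}$-norm. Concretely, inserting the factorization $V_a^t\left(s\right)^{1/2} V_a^t\left(s\right)^{-1/2}$ between the two factors of the inner product and applying the ordinary Cauchy--Schwarz inequality in $\mathbb{R}^{ms+1}$ yields
\begin{multline}
\left\vert \left\langle \hat{G}_a^t\left(s\right) - G_{a}\left(s\right),\, \Theta_t \right\rangle \right\vert \\ \leq \left\Vert \hat{G}_a^t\left(s\right) - G_{a}\left(s\right) \right\Vert_{V_a^t\left(s\right)} \left\Vert \Theta_t \right\Vert_{V_a^t\left(s\right)^{-1}}. \nonumber
\end{multline}
Here $V_a^t\left(s\right)\succ0$ guarantees that both $V_a^t\left(s\right)^{1/2}$ and $V_a^t\left(s\right)^{-1/2}$ are well defined and symmetric, so this step is purely algebraic and holds deterministically.

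Finally I would identify the two factors. By definition $\left\Vert \Theta_t \right\Vert_{V_a^t\left(s\right)^{-1}} = \sqrt{\Theta_t^\top V_a^t\left(s\right)^{-1} \Theta_t}$, and by Theorem \ref{theorem:model_error} there exists $\delta\in(0,1/2)$ such that $\left\Vert \hat{G}_a^t\left(s\right) - G_{a}\left(s\right) \right\Vert_{V_a^t\left(s\right)} \leq b_a^t\left(\delta,s\right)$ holds with probability at least $1-2\delta$. Substituting both into the Cauchy--Schwarz bound gives \eqref{eq:InequalityJon}. Since the Cauchy--Schwarz step introduces no randomness, the failure probability is inherited verbatim from Theorem \ref{theorem:model_error}, so the conclusion holds on the same event of probability at least $1-2\delta$. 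There is no genuine obstacle in this argument; the only point requiring care is the bookkeeping of the probabilistic event, ensuring the $1-2\delta$ confidence level carries through unchanged rather than being degraded by a union bound, which it does because the deterministic inequality is conditioned on the single event supplied by Theorem \ref{theorem:model_error}.
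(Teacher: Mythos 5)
Your proof is correct, but it takes a genuinely different route from the paper. The paper proves this lemma by setting up the constrained optimization problem $\max_{\tilde{G}} \tilde{G}^\top \Theta_t$ subject to $\tilde{G} \in \mathscr{C}_{a,s}$ (the confidence ellipsoid of radius $b_a^t(\delta,s)$ around $\hat{G}_a^t(s)$ in the $V_a^t(s)$-norm), solving it explicitly via a Lagrangian, and concluding that since $G_a(s)$ lies in $\mathscr{C}_{a,s}$ with probability at least $1-2\delta$ (by Theorem \ref{theorem:model_error}), its linear functional value cannot exceed the optimum $\hat{G}_a^t(s)^\top \Theta_t + b_a^t(\delta,s)\sqrt{\Theta_t^\top V_a^t(s)^{-1}\Theta_t}$. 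You instead apply the weighted Cauchy--Schwarz duality $\vert\langle x,y\rangle\vert \leq \Vert x\Vert_{V}\Vert y\Vert_{V^{-1}}$ directly to the error vector and the context, then invoke the same Theorem \ref{theorem:model_error}. The two arguments are mathematically equivalent in content --- the Lagrangian maximizer realizes exactly the Cauchy--Schwarz equality case --- but yours is shorter, more elementary, and has one small advantage: it delivers the two-sided (absolute-value) bound in a single step, whereas the paper's maximization argument as written only establishes the upper bound $G_a(s)^\top\Theta_t \leq \hat{G}_a^t(s)^\top\Theta_t + b_a^t(\delta,s)\sqrt{\Theta_t^\top V_a^t(s)^{-1}\Theta_t}$ explicitly and leaves the symmetric lower bound implicit. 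What the paper's approach buys in exchange is the explicit form of the optimistic parameter $\tilde{G}$ on the boundary of the ellipsoid, which is the object reused later in the optimism-based design of the perturbation signal $u_a$ and in the proofs of Theorem \ref{theorem:regret_2} and Theorem \ref{theorem:s_cost}; your argument discards that construction. Your probability bookkeeping is also handled correctly: the Cauchy--Schwarz step is deterministic, so the $1-2\delta$ confidence level is inherited from Theorem \ref{theorem:model_error} without any union bound.
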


The derivations and discussions for this bound are found in Appendix \ref{appendix:step_2}. \\
\noindent\textbf{$\bullet$~STEP 3:} Characterizing the impact of the proposed method on the probability of selecting the optimal action $a_t^\ast$.

\subsubsection{STEP 3} \label{subsec:step_3} Since \textbf{STEP 1} and \textbf{STEP 2} have been proven in Appendices \ref{appendix:step_1} and \ref{appendix:step_2}, we can now design the perturbation signal $u_a$. A principle commonly used for bandits is Optimism in the Face of Uncertainty \cite{lattimore2020bandit}. This principle states that the learner should be optimistic of actions it is uncertain of. The amount of optimism a learner has for each action is based on the error of the action's reward prediction. Therefore, we will initially design $u_a$ based on this approach. For analysis, Theorem \ref{theorem:regret} and Corollary \ref{corollary:regret_bound} investigate the impact of the perturbation signal $u_a$ on selecting the optimal action. Proofs for Theorem \ref{theorem:regret} and Corollary \ref{corollary:regret_bound} are found in Appendix \ref{appendix:regret_bounds}. Extended versions of Theorem \ref{theorem:regret} and Corollary \ref{corollary:regret_bound} are provided in Appendix \ref{appendix:regret_bounds}.

\begin{theorem}\label{theorem:regret}
    Assume that $n > 0$ is finite. Let actions $a \in [k]$ be selected based on \eqref{eq:perturbation_method} where 
    \begin{equation}\label{eq:u_a_b}
        u_a^{(\mathbf{b})} = \mathbf{b}_a^t\left(\delta,s\right)\sqrt{\Theta_t^\top V_a^t\left(s\right)^{-1} \Theta_t} ,
    \end{equation}
    where $\mathbf{b}_a^t\left(s\right)$ is defined as
    \begin{align}
        \mathbf{b}_a^t \left(\delta,s\right) \triangleq&  \sqrt{2B_R^2\log\left(\frac{1}{\delta}\frac{ \det(V_a^t\left(s\right))^{1/2}}{\det(\lambda I)^{1/2}}\right)} \nonumber \\ 
        &+ \sqrt{N_a g_{\Sigma_{z_t}}\left(\delta/N_a\right)} \mathbf{B}_{\beta}\sqrt{\mbox{tr}\left(I - \lambda V_a^t\left(s\right)^{-1}\right)}\nonumber\\ 
        &+ \lambda B_G \sqrt{\mbox{tr}\left(V_a^t\left(s\right)^{-1}\right)} \label{eq:b_def}, 
    \end{align}
    and $\mathbf{B}_{\beta} \geq B_{\beta}\left(s\right)$. Regret $\sum_{t=1}^n X_t^* - X_t$ has the following rate that is satisfied with a probability of at least $1 - 9\delta$:
    \begin{multline}\label{eq:regret_rate_increase}
        R_n^{(\mathbf{b})} = \mathcal{O}\left(k\sqrt{(ms+1)n\log^3(n)}\right) \\ + \mathcal{O}\left(knc^s\sqrt{(ms+1)\log^3(n)}\right) \\ + \mathcal{O}\left(kn c^s \sqrt{\log(n)}\right) + \mathcal{O}\left(k\sqrt{n}\right) . 
    \end{multline} 
    where $c^s$ is the magnitude of the term $\left\Vert \left(\Gamma - \Gamma K C_\theta\right)^s\right\Vert_2$. 
\end{theorem}

As an improvement in the design of $u_a$ for lowering the regret bound derived in Theorem \ref{theorem:regret}, we provide Corollary \ref{corollary:regret_bound}. 

\begin{corollary}\label{corollary:regret_bound}
    Assume that $n > 0$ is finite. Let actions $a \in [k]$ be selected based on \eqref{eq:perturbation_method} where 
    \begin{equation}\label{eq:u_a_e}
        u_a^{(\mathbf{e})} = \mathbf{e}_a^t\left(\delta,s\right)\sqrt{\Theta_t^\top V_a^t\left(s\right)^{-1} \Theta_t} ,
    \end{equation}
    where $\mathbf{e}_a^t\left(s\right)$ is defined as
    \begin{multline}
        \mathbf{e}_a^t \left(\delta,s\right) \triangleq  \sqrt{2B_R^2\log\left(\frac{1}{\delta}\frac{ \det(V_a^t\left(s\right))^{1/2}}{\det\left(\lambda I_{ms+1}\right)^{1/2}}\right)} \\ 
        + \lambda B_G \sqrt{\mbox{tr}\left(V_a^t\left(s\right)^{-1}\right)} \label{eq:e_def}.
    \end{multline}
    
    Regret $\sum_{t=1}^n X_t^* - X_t$ has the following has the following asymptotic rate that is satisfied with a probability of at least $1- 9\delta$: 
    \begin{multline}\label{eq:regret_rate_increase_no_bias}
        R_n^{(\mathbf{e})} = \mathcal{O}\left(k\sqrt{(ms+1)n\log^3(n)}\right) \\ + \mathcal{O}\left(knc^s\sqrt{(ms+1)\log^3(n)}\right)+ \mathcal{O}\left(kn c^s\sqrt{\log(n)}\right) \\ + \mathcal{O}\left(n^{3/2}c^s\sqrt{(ms+1)\log^2(n)}\right) + \mathcal{O}\left(k\sqrt{n}\right) . 
    \end{multline} 
    
\end{corollary}
For interpreting the inequalities used in Theorem \ref{theorem:regret} and Corollary \ref{corollary:regret_bound}, we can view the bounds as a sum of identified Kalman filter predictor errors, which according to \cite{tsiamis2019finite} has a model error of $\mathcal{O}\left(\sqrt{(ms+1)\log\left(n\right)/n}\right) + \mathcal{O}\left(1\right)$, where $\mathcal{O}\left(1\right)$ is based on the the impacts of the bias $\beta_a^t$ on model error. The sum of model errors for one action $a \in [k]$ is therefore  $\mathcal{O}\left(\sqrt{(ms+1)n\log^3\left(n\right)}\right) + \mathcal{O}\left(n\sqrt{(ms+1)\log^3\left(n\right)}\right)$. In addition, Theorem \ref{theorem:regret} and Corollary \ref{corollary:regret_bound} imply that if dedicated exploration/exploitation phases are used, i.e. an exploration phase where each action $a \in [k]$ is sampled for $\tau$ rounds and an exploitation phase for the remaining $n-k\tau$ rounds, then $\tau$ must be tuned such that the linear increase in regret from the exploration phase does not cause the regret to be higher than the regret rates \eqref{eq:regret_rate_increase} or \eqref{eq:regret_rate_increase_no_bias}. However, if the perturbation signal $u_a$ is used, then tuning the terms $\delta \in (0,1)$ and $B_R^2 > 0$ in $u_a$ only increases the rate by a constant. 

In Corollary \ref{corollary:regret_bound}, this presents a key issue with setting $u_a$ based on the model error bound of $G_a\left(s\right)-\hat{G}_a^t\left(s\right)$. In both Theorem \ref{theorem:regret} and Corollary \ref{corollary:regret_bound}, model error impacted by the bias both contribute to superlinear regret rates. In Theorem \ref{theorem:regret}, the constant in the superlinear regret rate is impacted by the bias $\beta_a^t$ bound $\mathbf{B}_{\beta}$ which must satisfy $B_{\beta}\left(s\right) \leq \mathbf{B}_{\beta}$. Therefore, using a \textit{known} bound $\mathbf{B}_{\beta}$ can be significantly large or nonexistent if the state matrix $\Gamma$ has eigenvalues on the unit circle (the term $\sqrt{\mbox{tr}\left(Z_t\right)}$ is unbounded). However, in Corollary \ref{corollary:regret_bound}, the constant in the superlinear regret rate is impacted by the chosen parameter $s$ value, which is a parameter that only has the restriction $s > 0$. Finally, Corollary \ref{corollary:regret_bound} states that controlling $s$ can lower regret significantly. Therefore, based on our analysis, we propose to design $u_a$ such that it only considers model error from Gaussian noise $\varepsilon_a^t$, i.e. $u_a$ is set to 
\begin{equation}
    u_a = \mathbf{e}_a^t\left(\delta,s\right)\sqrt{\Theta_t^\top V_a^t\left(s\right)^{-1} \Theta_t} \label{eq:u_a_design} , 
\end{equation}
where $\mathbf{e}_a^t\left(\delta,s\right)$ is defined in \eqref{eq:e_def}. In the next subsection, since the magnitude of $\beta_a^t$ is directly impacted by the window size $s$ where the magnitude of $\beta_a^t$ decreases as $s$ increases, we will use an adaptive window size algorithm for tuning $s$ to minimize the bias $\beta_a^t$ impact on model error and prediction error. This strategy will be used instead of applying the perturbation signal $u_a$ for controlling the bias $\beta_a^t$ impact on model error. 

\subsection{Adaptive Window-Size}\label{sec:window_size}

As mentioned in Remark \ref{remark:MagnitudeofBeta}, the window-size parameter $s$ impacts the magnitude of the bias term $\beta_a^t$ . Also, according to Theorem \ref{theorem:model_error} and Lemma \ref{lemma:optimal_value}, the window-size parameter $s$ impacts the model error $\left\Vert \hat{G}_a^t\left(s\right) - G_{a}\left(s\right) \right\Vert_{V_a^t\left(s\right)}$ (and consequently, the error $\hat{G}_a^t\left(s\right)^\top \Theta_t - G_{a}\left(s\right)^\top \Theta_t$). The following theorem provides a method to adaptively control the parameter $s$ so as to minimize the impact of the bias term and the model error on the model $\hat{G}_a^t\left(s\right)$ that predicts the reward $X_t$. Proof for Theorem \ref{theorem:s_cost} is found in Appendix \ref{appendix:s_cost}.

\begin{theorem}\label{theorem:s_cost}
Let $J_a^t\left(s\right)$  be a cost function defined as:
\begin{multline}\label{eq:s_param_cost_function}
J_a^t\left(s\right) \triangleq \left\vert X_t - \hat{G}_a^t\left( s\right)^\top  \Theta_{t} \right\vert \\+ \nu \mathbf{e}_a^t \left(\delta,s\right) \sqrt{\Theta_{t}^\top V_a^t\left(s\right)^{-1} \Theta_{t}} . 
\end{multline}

Then, there exists $\delta\in(0,1)$ such that the following inequality is satisfied with a probability of at least  $1-\delta$:
\begin{multline}\label{eq:upper_bound_2}
J_a^t\left(s\right) +\left\vert\varepsilon_a^t\right\vert \geq \\ \Bigg\vert \left(\left(1-\nu\right)\left(\lambda G_a\left(s\right) - \mathbf{E}_{\mathscr{T}_a}\mathbf{O}_{\mathscr{T}_a}^\top\right)\right) V_a^t\left(s\right)^{-1}\Theta_t\\ - \mathbf{B}_{\mathscr{T}_a} \mathbf{O}_{\mathscr{T}_a}^\top V_a^t\left(s\right)^{-1}\Theta_t + \beta_a^t \Bigg\vert. 
\end{multline}
\end{theorem}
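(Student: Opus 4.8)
The plan is to reduce the statement to the exact residual identity already derived while proving Theorem~\ref{theorem:model_error}, and then to peel off the noise and the $\nu$-weighted terms by repeated use of the (reverse) triangle inequality. First I would combine $X_t = G_a\left(s\right)^\top\Theta_t + \beta_a^t + \varepsilon_a^t$ from \eqref{eq:Matrix_Form_2} with the model-error identity \eqref{eq:equality1}. Writing, for brevity, $A \triangleq \lambda G_a\left(s\right)^\top V_a^t\left(s\right)^{-1}\Theta_t$, $E \triangleq \mathbf{E}_{\mathscr{T}_a}\mathbf{O}_{\mathscr{T}_a}^\top V_a^t\left(s\right)^{-1}\Theta_t$, and $B \triangleq \mathbf{B}_{\mathscr{T}_a}\mathbf{O}_{\mathscr{T}_a}^\top V_a^t\left(s\right)^{-1}\Theta_t$, this produces the clean decomposition
\[
X_t - \hat{G}_a^t\left(s\right)^\top\Theta_t = A - E - B + \beta_a^t + \varepsilon_a^t ,
\]
whereas the quantity inside the absolute value on the right-hand side of \eqref{eq:upper_bound_2} is exactly $R \triangleq \left(1-\nu\right)\left(A - E\right) - B + \beta_a^t$.

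With these abbreviations the argument becomes purely algebraic together with two triangle inequalities. Since the residual equals $\left(A - E - B + \beta_a^t\right) + \varepsilon_a^t$, the reverse triangle inequality gives $\left\vert X_t - \hat{G}_a^t\left(s\right)^\top\Theta_t\right\vert \geq \left\vert A - E - B + \beta_a^t\right\vert - \left\vert\varepsilon_a^t\right\vert$, so the extra $\left\vert\varepsilon_a^t\right\vert$ appearing on the left of \eqref{eq:upper_bound_2} exactly absorbs the noise and yields $\left\vert X_t - \hat{G}_a^t\left(s\right)^\top\Theta_t\right\vert + \left\vert\varepsilon_a^t\right\vert \geq \left\vert A - E - B + \beta_a^t\right\vert$. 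I would then use the identity $A - E - B + \beta_a^t = R + \nu\left(A - E\right)$ and the reverse triangle inequality once more to obtain $\left\vert A - E - B + \beta_a^t\right\vert \geq \left\vert R\right\vert - \nu\left\vert A - E\right\vert$. At this stage the target $\left\vert R\right\vert$ has appeared, and it remains only to show that the $\nu\mathbf{e}_a^t\left(\delta,s\right)\sqrt{\Theta_t^\top V_a^t\left(s\right)^{-1}\Theta_t}$ term in the definition of $J_a^t\left(s\right)$ dominates $\nu\left\vert A - E\right\vert$.

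The crux is therefore the bound $\left\vert A - E\right\vert \leq \mathbf{e}_a^t\left(\delta,s\right)\sqrt{\Theta_t^\top V_a^t\left(s\right)^{-1}\Theta_t}$. Writing $A - E = \left(\lambda G_a\left(s\right) - \mathbf{O}_{\mathscr{T}_a}\mathbf{E}_{\mathscr{T}_a}^\top\right)^\top V_a^t\left(s\right)^{-1}\Theta_t$ and applying Cauchy--Schwarz in the $V_a^t\left(s\right)^{-1}$ inner product yields
\[
\left\vert A - E\right\vert \leq \left\Vert \lambda G_a\left(s\right) - \mathbf{O}_{\mathscr{T}_a}\mathbf{E}_{\mathscr{T}_a}^\top\right\Vert_{V_a^t\left(s\right)^{-1}}\sqrt{\Theta_t^\top V_a^t\left(s\right)^{-1}\Theta_t},
\]
and a further triangle inequality splits the weighted norm into $\left\Vert\lambda G_a\left(s\right)\right\Vert_{V_a^t\left(s\right)^{-1}} + \left\Vert\mathbf{O}_{\mathscr{T}_a}\mathbf{E}_{\mathscr{T}_a}^\top\right\Vert_{V_a^t\left(s\right)^{-1}}$. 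These are precisely the regularization and noise contributions bounded in the proof of Theorem~\ref{theorem:model_error}: \eqref{eq:InequalityMehdi2} controls the first deterministically by $\lambda B_G\sqrt{\mbox{tr}\left(V_a^t\left(s\right)^{-1}\right)}$, and \eqref{eq:boundproof1} controls the second by the self-normalized term with probability at least $1-\delta$; their sum is exactly $\mathbf{e}_a^t\left(\delta,s\right)$ of \eqref{eq:e_def}. Substituting this back cancels $-\nu\left\vert A - E\right\vert$ against $+\nu\mathbf{e}_a^t\left(\delta,s\right)\sqrt{\Theta_t^\top V_a^t\left(s\right)^{-1}\Theta_t}$, leaving $J_a^t\left(s\right) + \left\vert\varepsilon_a^t\right\vert \geq \left\vert R\right\vert$, which is \eqref{eq:upper_bound_2}.

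The step I expect to require the most care is the bookkeeping of which contributions $\mathbf{e}_a^t\left(\delta,s\right)$ is allowed to absorb. It is essential that $\mathbf{e}_a^t\left(\delta,s\right)$ omits the bias term that is present in $b_a^t\left(\delta,s\right)$ of \eqref{eq:probabilistic_bound}, so that the bias contribution $B$ and the residual bias $\beta_a^t$ are deliberately \emph{retained} inside $\left\vert R\right\vert$ rather than being cancelled; this is exactly what makes the inequality a usable certificate of the bias and model error that the window-size rule in Section~\ref{sec:window_size} is meant to monitor. The only other point to track is that the single probabilistic ingredient is the self-normalized noise bound \eqref{eq:boundproof1}, so the stated confidence $1-\delta$ is inherited directly; every remaining manipulation---the two reverse triangle inequalities, the splitting $A - E - B + \beta_a^t = R + \nu\left(A - E\right)$, and Cauchy--Schwarz---is deterministic.
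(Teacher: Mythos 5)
Your proposal is correct and follows essentially the same route as the paper's proof: the same residual identity \eqref{eq:reward_expression_cost_fun} obtained from \eqref{eq:Matrix_Form_2} and \eqref{eq:equality1}, the same $\nu$-splitting isolating $\left(1-\nu\right)\left(A-E\right)-B+\beta_a^t$, and the same single probabilistic ingredient \eqref{eq:boundproof1} combined with the deterministic bound \eqref{eq:InequalityMehdi2} to absorb $\nu\left\vert A-E\right\vert$ into $\nu\,\mathbf{e}_a^t\left(\delta,s\right)\sqrt{\Theta_t^\top V_a^t\left(s\right)^{-1}\Theta_t}$. The only cosmetic difference is that you bound $\left\vert A-E\right\vert$ by a direct Cauchy--Schwarz argument in the $V_a^t\left(s\right)^{-1}$ inner product, whereas the paper routes the same bound through the constrained-maximization (Lagrangian) argument of Lemma \ref{lemma:optimal_value}; these are equivalent, since the maximum of a linear functional over that ellipsoid is exactly the Cauchy--Schwarz bound.
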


\begin{remark}
According to the fact that $\varepsilon_a^t$ is invariant to $s$, Theorem \ref{theorem:s_cost} suggests that minimizing the cost function $J_a^t\left(s\right)$ would minimize the bias term $\beta_a^t$. 
\end{remark}

\begin{remark}
The difference $\left\vert X_t - \hat{G}_a^t\left(s\right)^\top \Theta_t\right\vert$ in the cost function $J_a^t\left(s\right)$ of Theorem \ref{theorem:s_cost} is noisy due to the reward $X_t$ and the vector of contexts $\Theta_t$. However, using an average of $\left\vert X_t - \hat{G}_a^t\left(s\right)^\top \Theta_t\right\vert$ does not consider how $\hat{G}_a^t\left(s\right)$ improves as the number of samples $N_a$ increases. Therefore, we optimize for the window size $s$ using the following cost function 
\begin{equation}
    \begin{array}{cc}
        & \hat{J}_a^t\left(s\right) = \zeta_a^t\left(s\right) + \nu \mathbf{e}_a^t\left(\delta,s\right)\sqrt{\Theta_t^\top V_a^t\left(s\right) \Theta_t}\\
        \mbox{ s.t. } & \zeta_a^{t+1}\left(s\right) = \alpha\zeta_a^t\left(s\right) + \left(1-\alpha\right)\left\vert X_t - \hat{G}_a^t\left(s\right)^\top \Theta_t\right\vert
    \end{array}\nonumber. 
\end{equation}
\end{remark}

\section{Bandit Strategy}\label{sec:Bandit_Strategy}
The strategy the learner will use when choosing an action $a \in [k]$ is based on the theorems and lemmas mentioned above which can be broken down into two parts: \textbf{Part 1:} selecting parameter $s$ and \textbf{Part 2:} selecting action $a \in [k]$. This is comprehensively shown in Algorithm \aresref{}. Fig. \ref{figure:algorithm} illustrates \textbf{Part 1} and \textbf{Part 2} of the algorithm.

\textbf{Part 1:} A model $\hat{G}_a^t\left(s\right) \gets \mathbf{0}_{(ms+1) \times 1}$ is set for each parameter $s \in \mathscr{S} \triangleq \{0,1,\dots,s_N\}$ and action $a \in [k]$, where $s_N$ is a preset value for the maximum window size of previous observations that are used. This provides $s_N  \times k$ models. For each parameter $s \in \mathscr{S} \triangleq \{0,1,\dots,s_N\}$ and action $a \in [k]$, the cost function $\hat{J}_a^t\left(s\right)$ is computed. Model $\hat{G}_a^t\left(s\right)$ uses parameter $s_a$ that minimizes $\hat{J}_a^t\left(s\right)$, i.e. $s_a = \arg\min_{s \in \{0,1,\dots,s_N\}}\hat{J}_a^t\left(s\right)$. However, for all the perturbation terms $u_a$, $a \in [k]$, they will use the maximum $s_a$, i.e. $s = \max_{a\in[k]} s_a$. 

\textbf{Part 2:} The action $a \in [k]$ that the learner chooses is $a = \arg\max_{a \in [k]} \hat{G}_a^t\left(s\right)^\top \Theta_t + u_a$ , where $u_a$ is defined in \eqref{eq:u_a_design}. 

\begin{remark}
    Note that in lines 19 and 20 of Algorithm \aresref{} implies that the $u_a$ is set with the maximum $s_a$ out of all the actions. Not implementing lines 19 and 20 will cause the algorithm to commit to the action with the highest $s_a$ value instead of exploring all the actions. 
\end{remark}

\begin{center}
\textbf{Algorithm: \\ Adaptive Recursive least-squares Exploration of System (ARES)}
\end{center}

\textbf{Input}: $\delta, \alpha \in (0,1)$, $s_N, \lambda, B_R > 0$ 
\begin{enumerate}[leftmargin=*, label=Step \arabic*:]
    \item Initialize: $t \gets 1$, $\mathscr{S} = \{0,1,\dots,s_N\}$
    \item For each $a \in [k]$:
    \begin{enumerate}[leftmargin=*, label=(\alph*)]
        \item $\mathscr{T}_a \gets \{\}$
        \item For each $s \in \{0,1,\dots,s_N\}$:
        \begin{enumerate}[leftmargin=*, label=(\roman*)]
            \item $M_a \gets 0$
            \item $N_a \gets 0$
            \item $V_a \gets \lambda I_{ms+1 \times ms+1}$
            \item $\hat{G}_a^t(s) \gets \mathbf{0}_{(ms+1) \times 1}$
            \item $\mathbf{e}_a^t(\delta,s) \gets 1/\epsilon$ (where $\epsilon$ is small)
            \item $\hat{J}_a^t(s) \gets 0$
        \end{enumerate}
    \end{enumerate}
    \item For each $t \in [n]$:
    \begin{enumerate}[leftmargin=*, label=(\alph*)]
        \item For each $a \in [k]$:
        \begin{enumerate}[leftmargin=*, label=(\roman*)]
            \item $s_a \gets \arg\min_{s \in \{0,1,\dots,s_N\}} \hat{J}_a^t(s)$
        \end{enumerate}
        \item $s \gets \max_{a \in [k]} s_a$
        \item $u_a$ calculation:
        \begin{enumerate}[leftmargin=*, label=(\roman*)]
            \item If $t \geq s_a$: \\ $u_a \gets \mathbf{e}_a^t(\delta,s)\sqrt{\Theta_t^\top V_a^t(s)^{-1} \Theta_t}$
            \item If $t < s_a$: \\ $u_a \gets \sqrt{2\frac{\log(1/\delta)}{M_a}}$
        \end{enumerate}
        \item $a^*$ calculation:
        \begin{enumerate}[leftmargin=*, label=(\roman*)]
            \item If $t \geq s_a$: \\ $a^* \gets \underset{a \in [k]}{\arg\max} \hat{G}_a^t(s_a)^\top \Theta_t + u_a$
            \item If $t < s_a$: \\ $a^* \gets \underset{a \in [k]}{\arg\max} \sum_{t \in \mathscr{T}_a} \frac{X_t}{M_a} + u_a$
        \end{enumerate}
        \item Sample $(X_t, \theta_t)$ from Eq \eqref{eq:Linear_System}
        \item $\mathscr{T}_{a^*} \gets \mathscr{T}_{a^*} \cup \{t\}$
        \item Update counts:
        \begin{enumerate}[leftmargin=*, label=(\roman*)]
            \item If $t < s_{a^*}$: \\ $M_{a^*} \gets M_{a^*} + 1$
            \item If $t \geq s_{a^*}$: \\ $N_{a^*} \gets N_{a^*} + 1$
        \end{enumerate}
        \item For each $s \in \{0,1,\dots,\min(t,s_N)\}$:
        \begin{enumerate}[leftmargin=*, label=(\roman*)]
            \item $V_{a^*}^{t+1}(s) \gets V_{a^*}^t(s) + \Theta_\tau \Theta_\tau^\top$
            \item $\hat{G}_{a^*}^{t+1}(s)^\top \gets \hat{G}_{a^*}^t(s)^\top \\ + X_\tau \Theta_\tau^\top V_{a^*}^t(s)^{-1}$
            \item $\zeta_{a^*}^{t+1}(s) \gets \alpha\zeta_{a^*}^t(s) \\ + (1-\alpha)\left|X_t - \hat{G}_{a^*}^t(s)^\top \Theta_t\right|$
            \item $\hat{J}_{a^*}^t(s) \gets \zeta_{a^*}^t(s) \\ + \mathbf{e}_{a^*}^t(\delta,s) \sqrt{\Theta_t^\top V_{a^*}^t(s)\Theta_t}$
            \item Update $\mathbf{e}_{a^*}^t(\delta,s)$ based on Eq \eqref{eq:e_def}
        \end{enumerate}
    \end{enumerate}
\end{enumerate}

\begin{figure}[ht]
    \centering
    \includegraphics[width=\linewidth]{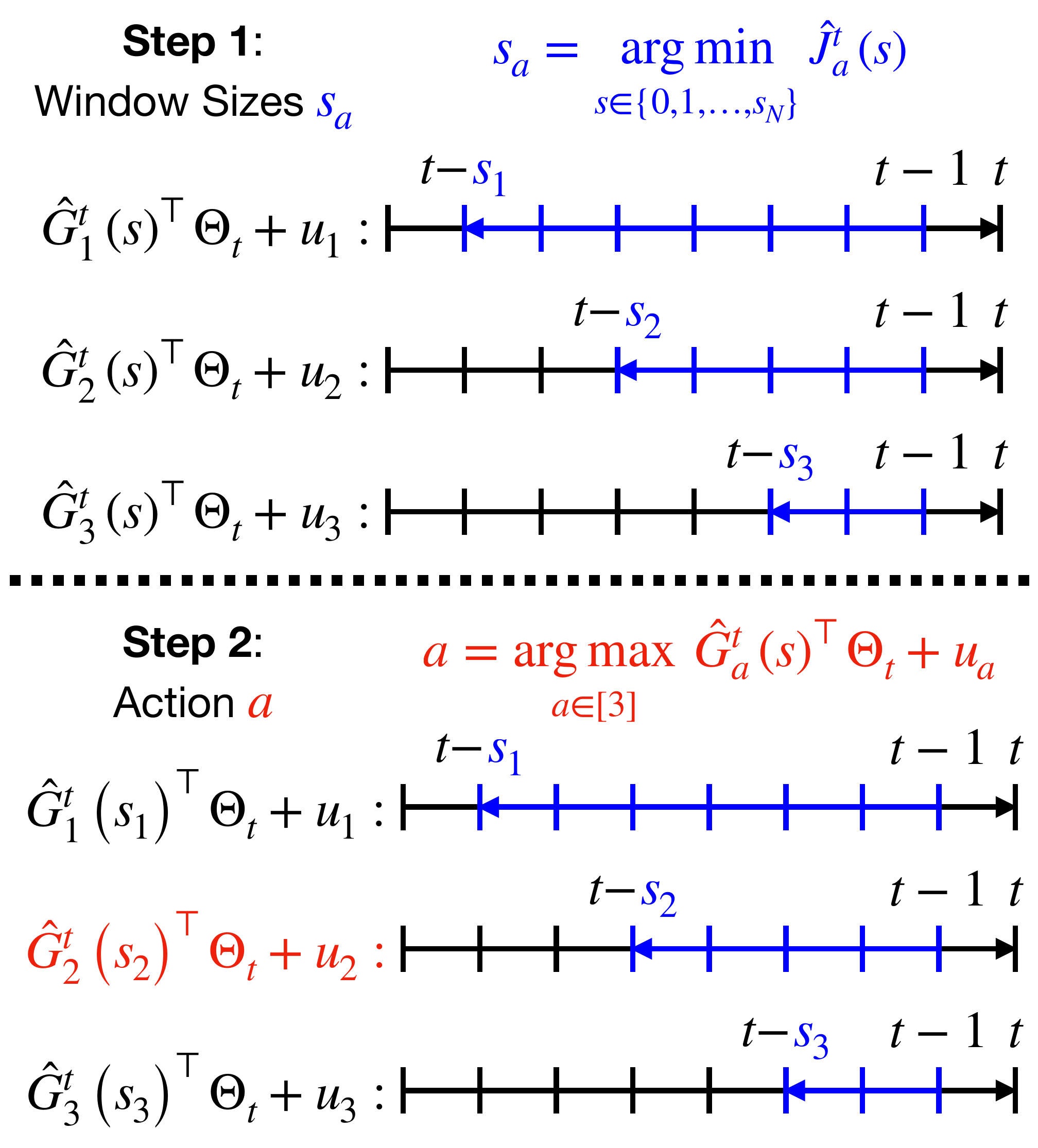}
    \caption{In step 1, the learner selects the window size $s_a$ for each action $a \in [k]$. As for step 2, the learner selects the action $a \in [k]$ that maximizes the quantity $\hat{G}_a^t\left(s_a\right)^\top\Theta_t + u_a$ with the chosen $s_a$. }
    \label{figure:algorithm}
\end{figure}

\subsection{How tight is ARES Regret Upper Bound?}

To study the tightness of the ARES regret upper bound derived in Corollary \ref{corollary:regret_bound}, we first prove the regret lower bound. The regret lower bound provides a metric of how difficult it is to consistently select the optimal action $a_t^*$. To acquire this bound, we first provide the following \textit{Oracle} which selects actions $c_a \in \mathcal{A}$ based on the following optimization problem 
\begin{equation}\label{eq:Kalman_Oracle}
    \begin{array}{cc}
        \underset{a \in [k]}{\arg\max} & \left\langle c_a, \tilde{z}_t\right\rangle \\
        \mbox{ s.t. } & \tilde{z}_{t+1} = \Gamma \tilde{z}_t + \Gamma \tilde{K} \left(Y_t - \tilde{C} \tilde{z}_t\right)
    \end{array}
\end{equation}
where $\tilde{K}$, $\tilde{C}$, and their associated terms are defined as 
\begin{align}  
    \tilde{K} & \triangleq \tilde{P} \tilde{C}^\top \left(\tilde{C} \tilde{P} \tilde{C}^\top + \tilde{R}\right)^{-1} \nonumber \\
    \tilde{P} & = \Gamma \tilde{P} \Gamma^\top + Q \nonumber \\
    & ~~~~~ - \Gamma \tilde{P} \tilde{C}^\top \left(\tilde{C} \tilde{P} \tilde{C}^\top + \tilde{R}\right)^{-1}\tilde{C} \tilde{P}\Gamma^\top \nonumber \\
    \tilde{C} & \triangleq \begin{pmatrix}
        c_1^\top &
        c_2^\top &
        \dots &
        c_k^\top &
        C_\theta^\top
    \end{pmatrix}^\top \nonumber \\
    \tilde{R} & \triangleq \mbox{blockdiag}\left(\sigma^2, \dots,\sigma^2, R\right) \nonumber \\
    Y_t & \triangleq \Tilde{C} z_t + \nu_t, ~ \nu_t \sim \mathcal{N}\left(\mathbf{0},\tilde{R}\right) \nonumber. 
\end{align}

The \textit{Oracle} is the method that selects actions $c_a \in \mathcal{A}$ using the context $\theta_t$ and rewards $X_t$ for \textit{each} corresponding action $c_a\in \mathcal{A}$. In other words, this method is not viable with our assumptions as it observes more information than the problem formulation allows. Using the \textit{Oracle} action selection method in \eqref{eq:Kalman_Oracle}, the following theorem proves a \textit{lower bound} on its lowest obtainable regret. Proof for Theorem \ref{theorem:lower_bound_discrete} is found in Appendix \ref{appendix:lower_bound}. 

\begin{theorem}\label{theorem:lower_bound_discrete}
    Let there regret $R_n$ \eqref{eq:regret}. The lower bound for regret $R_n$ is the following inequality 
    \begin{equation}\label{eq:regret_lower_bound_discrete}
        R_n \geq n \sum_{i \in [k]} \sum_{j \in [k]} \sqrt{\frac{2\left(c_{a_j} - c_{a_i} \right)^\top Z \left(c_{a_j} - c_{a_i}\right)}{\mbox{tr}\left(\Psi_{i|j}\right)^{2k-2} \left\vert \Tilde{\Sigma}_{i|j}\right\vert }}
    \end{equation}
    where $\Tilde{\Sigma}_{i|j},\Psi_{i|j}$ are defined to be 
    \begin{align}
        \Tilde{\Sigma}_{i|j} & \triangleq A_i \Tilde{Z} A_i^\top - A_i\Tilde{Z} A_j^\top \left(A_j Z A_j^\top \right)^{-1} A_j \Tilde{Z} A_i^\top \label{eq:tilde_sigma_ij}\\
        \Psi_{i|j} & \triangleq \begin{pmatrix} \Sigma_{i|j}^{-1} & \Sigma_{i|j}^{-1}\Pi_{i|j} \\             
        \Pi_{i|j}^\top \Sigma_{i|j}^{-1} & \Pi_{i|j}^\top \Sigma_{i|j}^{-1}\Pi_{i|j}         
        \end{pmatrix}\label{eq:psi_ij}       
    \end{align}
    which are based on the following defined terms
    \begin{align}
        \Tilde{Z} & \triangleq Z-\tilde{P} \\
        \begin{pmatrix}
            A_i \left(z_t - e_{t|t-1}\right) \\
            A_j z_t
        \end{pmatrix} & \sim \mathcal{N}\left(\mathbf{0}, \Sigma_{i,j}\right) \nonumber \\
        \Sigma_{i,j} & \triangleq \begin{pmatrix}
            A_i \Tilde{Z} A_i^\top & A_i\Tilde{Z} A_j^\top \\
            A_j \Tilde{Z} A_i^\top & A_j Z A_j^\top 
        \end{pmatrix} \label{eq:oracle_covariance} \\
        A_i \left(z_t - e_{t|t-1}\right) \mid A_j z_t & \sim \mathcal{N}\left( \Pi_{i|j}z_t , \Tilde{\Sigma}_{i|j}\right) \nonumber \\
        A_i & \triangleq \begin{pmatrix}
            c_{a_i} - c_{a_1'} & \dots & c_{a_i} - c_{a_{k-1}'}
        \end{pmatrix}^\top \nonumber  \\
        \Pi_{i|j} & \triangleq A_i\Tilde{Z} A_j^\top \left(A_j Z A_j^\top \right)^{-1} A_j  \nonumber, 
    \end{align}
    and $Z$ solves the Lyapunov equation $Z = \Gamma Z \Gamma^\top + Q $. 
\end{theorem}

From Theorem \ref{theorem:lower_bound_discrete}, the \textit{Oracle} cannot obtain a regret rate lower than linear. This follows from what was proven in Theorem 1 in \cite{besbes2014stochastic} where the authors have proven that no bandit algorithm can obtain a regret lower than linear in environments that linearly change. Therefore, based on the asymptotic rate obtained in Corollary \ref{corollary:regret_bound}, having a superlinear regret rate of at most $n^{3/2} c^s \log\left(n\right)$ is appropriate, which can be lowered based on the design of window-size $s$.

\section{Evaluation Study}\label{sec:Numerical_Simulation}

The proposed Algorithm \aresref{} is compared with an Oracle (where the chosen action $a \in [k]$ is the action $a$ that has the maximum $\hat{X}_t$ from the Kalman filter \eqref{eq:Kalman_Filter}), UCB \cite{agrawal1995sample}, Sliding Window UCB \cite{garivier2008upper}, Rexp3 \cite{besbes2014stochastic}, Random (for each round, the action is chosen uniformly random), and PIES \cite{9993316}. PIES is similar to ARES where in the first $ks$ rounds the algorithm sequentially chooses each action $a \in [k]$ to identify the each vector $G_a\left(s\right)$. After the $ks$ rounds, PIES then chooses the action based on what it predicts to return the highest reward. In addition, PIES is preset with a parameter $s \in \{1,2,\dots,9,10\}$ which is the fixed window size of the algorithm. Each action has the same $s \in \{1,2,\dots,9,10\}$ value. The parameters used for ARES are in Table \ref{table:parameters}. 


    

\begin{table}[ht]

    \centering
    \begin{center}
        \caption{ARES Defined Parameters}\label{table:parameters}
    \end{center}
    \resizebox{\columnwidth}{!}{
    \begin{tabular}{|c|c|c|c|c|c|c|c|}
        \hline
        $\delta$ & $\lambda$ & $\alpha$ & $s_N$ & $\nu$ & $B_R$ & $B_G$ & $B_c$ \\
        \hline
        0.1 & 1 & 0.99 & 10 & 0.1 & $\max_a c_a^\top P c_a + \sigma_\eta^2$ & 100 & 1 \\
        \hline
    \end{tabular}
    }
\end{table}

\begin{remark}
    The Oracle chooses actions based on the optimal prediction of the rewards from \eqref{eq:Linear_System} in the mean-squared error sense as it assumes knowledge of \eqref{eq:Linear_System}. Therefore, the Oracle is not applicable for the considered setting. Rather, we use the Oracle as a baseline for regret performance.
\end{remark}

For the numerical example, we consider the following LGDS
\begin{equation}\label{eq:LGDS_Good}
    \begin{cases}
        z_{t+1} & = \Psi z_t + \xi_t \\
        \theta_t & = C_\theta z_t + \phi_t\\
        X_t & = \left\langle \mathbf{e}_a, z_t \right\rangle + \eta_t
    \end{cases}, 
\end{equation}
\begin{equation}\label{eq:noise_stats}
    \xi_t \sim \mathcal{N}\left(0,I_{5}\right), ~\phi_t \sim \mathcal{N}\left(0,1\right), ~ \eta_t \sim \mathcal{N}\left(0,1\right) \nonumber \nonumber,
\end{equation}
where $\Psi \in \mathbb{R}^{5 \times 5}$ is defined as follows
\begin{align}
    \Psi & = \frac{0.99}{\rho\left(T\right)}T, ~ T[i,j] = \begin{cases}
        2^{- \psi \left\vert i - j \right\vert } & \mbox{ if } i < j \\
        - 2^{- \psi \left\vert i - j \right\vert } & \mbox{ if } i \geq j 
    \end{cases} \nonumber \\
    C_\theta & \triangleq \begin{pmatrix}
        1 & 0 & 0 & 0 & 0
    \end{pmatrix} \nonumber, 
\end{align}
\begin{equation}
    \mathbf{e}_a \in \mathcal{A} \triangleq \left\{\begin{pmatrix}
        0 & 0 & 0 & 1 & 0
    \end{pmatrix}^\top, \begin{pmatrix}
        0 & 0 & 0 & 0 & 1
    \end{pmatrix}^\top\right\} \nonumber. 
\end{equation}

The LGDS \eqref{eq:LGDS_Good} is computed for $10^4$ rounds to set the state variable $z_t$ to a steady-state distribution. The total number of interactions between the algorithm and the environment is $n = 10^4$. Each algorithm is average across 100 different simulations.

The rationale behind \eqref{eq:LGDS_Good} is as follows. ARES is designed such that it is adaptively exploring using $u_a$ and adjusting window size $s$ (number of contexts $\theta_{t-s}, \dots, \theta_{t-s}$ in the past used to predict each action's reward $X_t$). Recall that the magnitude of $u_a$ is impacted by the bound $B_R^2 \geq \mbox{Var}\left(\varepsilon_a^t\right)$ where larger $u_a$ values lead to longer exploration times. For the bias term $\beta_a^t$, note that the magnitude of the bias term decreases exponentially as the parameter $s$ increases where the rate of decrease is dependent on the spectral radius $\rho\left(\Psi - \Psi K C_\theta\right)$. Therefore, we design the LGDS \eqref{eq:LGDS_Good} parameterized by $\psi$ since $\psi$ in the LGDS \eqref{eq:LGDS_Good} impacts both $\mbox{Var}\left(\varepsilon_a^t\right)$ and $\rho\left(\Psi - \Psi K C_\theta\right)$ where $K$ is based on the steady-state Kalman filter \eqref{eq:ss_Kalman_Filter}. For the simulations, parameter $\psi$ is set to $\psi = 1,2,3,4$. 

In Figure \ref{figure:regret}, the plots on the left are the median percentage of regret decreased when using ARES over the comparison algorithm. For the plots on the right, each line is each algorithm's regret value normalized  with respect to the Oracle's regret. Note that we only plot the PIES's normalized regret with parameters $s=1$ and $s=10$ since parameter $s = 1$ is where PIES has the highest regret value while parameter $s = 10$ is where PIES has the lowest regret value. It appears that both PIES and ARES perform better than any of the other comparison algorithms. For $\psi=1,2,3$, there exists a parameter $s$ value such that PIES does better than ARES. Each subtitle contains the set $\psi$ value, the maximum $c_a^\top P c_a + \sigma_\eta^2$ value (where $P$ is the steady-state Kalman filter \eqref{eq:ss_Kalman_Filter} error covariance matrix), and the spectral radius $\rho\left(\Psi - \Psi K C_\theta\right)$. It appears that as $\max_a c_a^\top P c_a + \sigma_\eta^2$ and $\rho\left(\Psi - \Psi K C_\theta\right)$ increase, ARES starts to perform better than PIES for any preset parameter $s=1,\dots,10$ value. For $\psi = 1,3$, there exists a parameter  $s$ value such that PIES performs better than ARES, where ARES's regret is at most $10$\% more than than PIES regret. Finally, it appears that ARES and PIES normalized regrets appear to get closer to 1 (the Oracle's regret value) as $\psi$ is closer to 1. This implies that as $\psi$ increases, the environment for PIES and ARES becomes harder, i.e. the model $G_a\left(s\right)$ is more difficult to identify.

Figure \ref{figure:perturbation} analyzes if either the adaptive exploration $u_a$ term or adaptive window size selector $s$ is the cause for ARES worse performance in comparison to PIES for $\psi = 1,3$. In Figure \ref{figure:perturbation}, the plots on the left are the average perturbation $u_a$ values for each action  and the plots on the right are the average parameter $s$ values. Note that the average parameter $u_a$  values for each $\psi$ value is approximately the same. However, for the right plots, the average parameter $s$ value for each action is larger for $\psi = 1,2,3$ in comparison to the average parameter $s$ value for $\psi = 4$. In addition, we can observe that for $\psi = 1,3$ the average parameter $s$ for each action slowly increases while for $\psi=2,4$ the values stay approximately constant after round $t=2500$. The left plots of Figure \ref{figure:regret} for PIES performance at $\psi = 1,2,3$ imply that ARES needs a large $s$ value to obtain better performance. Finally, the spectral radius $\rho\left(\Psi - \Psi K C_\theta\right)$ increases as $\psi$ increases, implying that the parameter $s$ value is less impactful and therefore small $s$ values are sufficient. Therefore, Figure \ref{figure:perturbation} implies that if ARES adaptive window selector takes time to select high parameter $s$ values, then it will do worse that PIES with high preset $s$ values.

The parameter $s$ value in ARES is chosen such that it minimizes $\hat{J}_a^t\left(s\right)$ \eqref{eq:s_param_cost_function}, where $\hat{J}_a^t\left(s\right)$ consists of two terms: $\left\vert X_t - \hat{G}_a^t\left(s\right)^\top \Theta_t\right\vert$ and $\nu \mathbf{e}_a^t\left(\delta,s\right) \sqrt{\Theta_t^\top V_a^t\left(s\right) \Theta_t }$. The term $\nu \mathbf{e}_a^t\left(\delta,s\right) \sqrt{\Theta_t^\top V_a^t\left(s\right) \Theta_t }$ is similar to the perturbation input $u_a$; therefore, we only have to analyze the impact of $\left\vert X_t - \hat{G}_a^t\left(s\right)^\top \Theta_t\right\vert$. Recall that PIES identifies $G_a\left(s\right)$ and uses the identified vector predicting the reward $X_t$ for each action $a \in [k]$ given a set parameter $s$  value, which is similar to ARES. Therefore, analyzing the PIES's reward prediction error for each parameter $s$ value gives intuition on $\left\vert X_t - \hat{G}_a^t\left(s\right)^\top \Theta_t\right\vert$ . In Figure \ref{figure:parameter_s}, the left plots are the average errors for each $a = 1$ for each parameter $s = 1,2,\dots,10$. For the right plots, these are the average errors of action $a = 2$  for each parameter $s = 1,2,\dots,10$. As shown, it appears that ARES on average chooses the parameter $s$ value that minimizes the prediction error $\left\vert X_t - \hat{G}_a^t\left(s\right)^\top \Theta_t\right\vert$ the most for each action. Figure \ref{figure:parameter_s} also provides intuition on why ARES chooses a larger parameter $s$ value for one action versus another action. For example, in the third lowest plot of Figure \ref{figure:parameter_s}, action $a = 1$ error for parameters $s= 8,9,10$ are similar. However, for action $a = 2$, its error for parameter $ s= 10$ is the smallest value. Since we penalize for choosing high parameter $s$ values with $\nu \mathbf{e}_a^t\left(\delta,s\right) \sqrt{\Theta_t^\top V_a^t\left(s\right) \Theta_t }$, ARES chooses the smallest parameter $s$ value action $a = 1$ which is $s = 8$, while ARES chooses parameter $s = 10$ for action $a = 2$  since it is the value that decreases the prediction error the most. Therefore, the takeaway from Figure \ref{figure:parameter_s} is if the prediction errors are approximately the same for a subset of parameter $s$ values, ARES will choose the minimum parameter $s$ value from the subset.

\begin{figure}[ht]
    \centering
    \includegraphics[width=\linewidth]{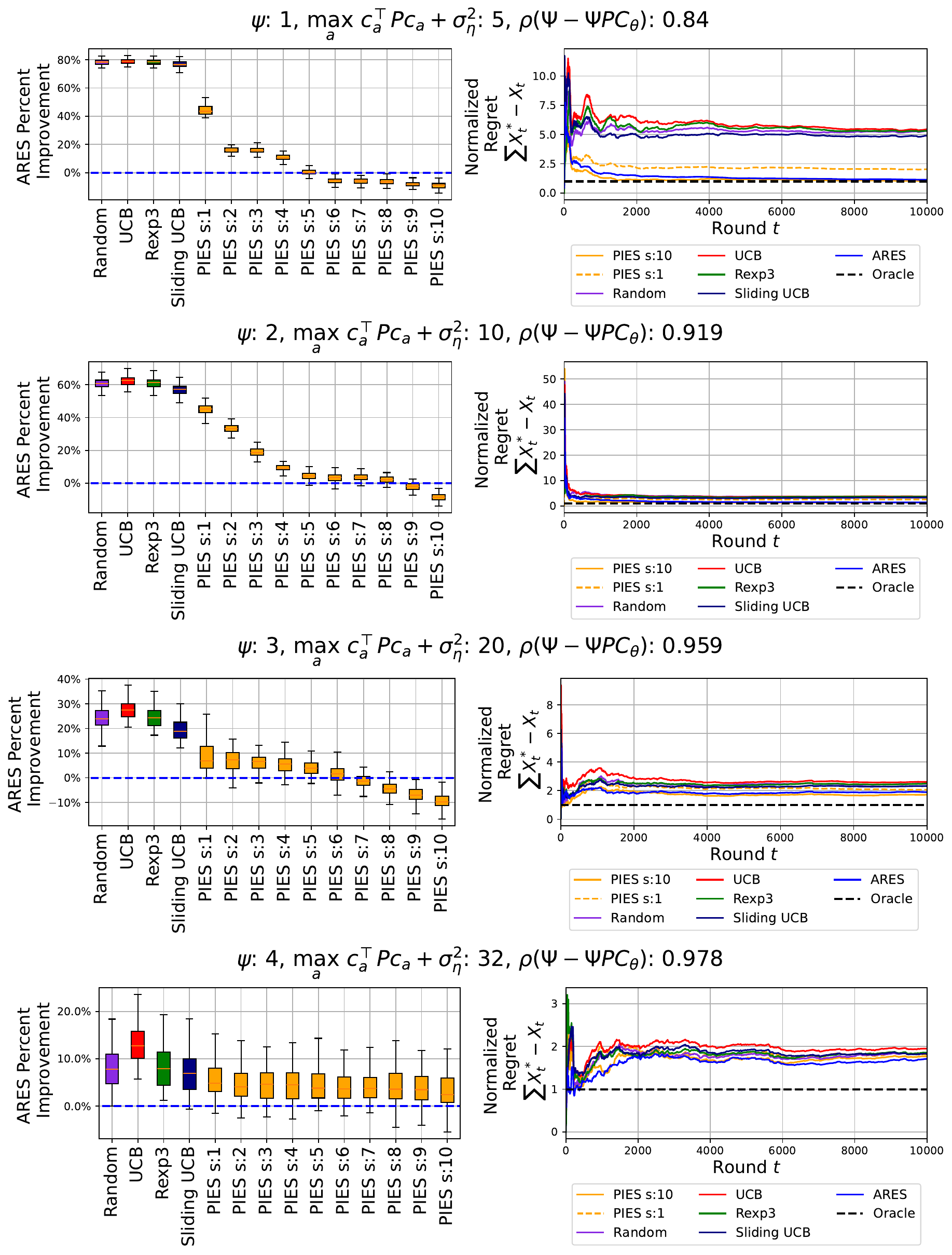}
    \caption{Plots on the left are median decrease/increase (positive percent/negative percent) in regret with respect to ARES's regret. The bottom and top part of the boxes are the first and third quantiles, respectively. Plots on the right are the median regret for each algorithm for each round. }
    \label{figure:regret}
\end{figure} 

\begin{figure}[ht]
\centering
    \includegraphics[width=\linewidth]{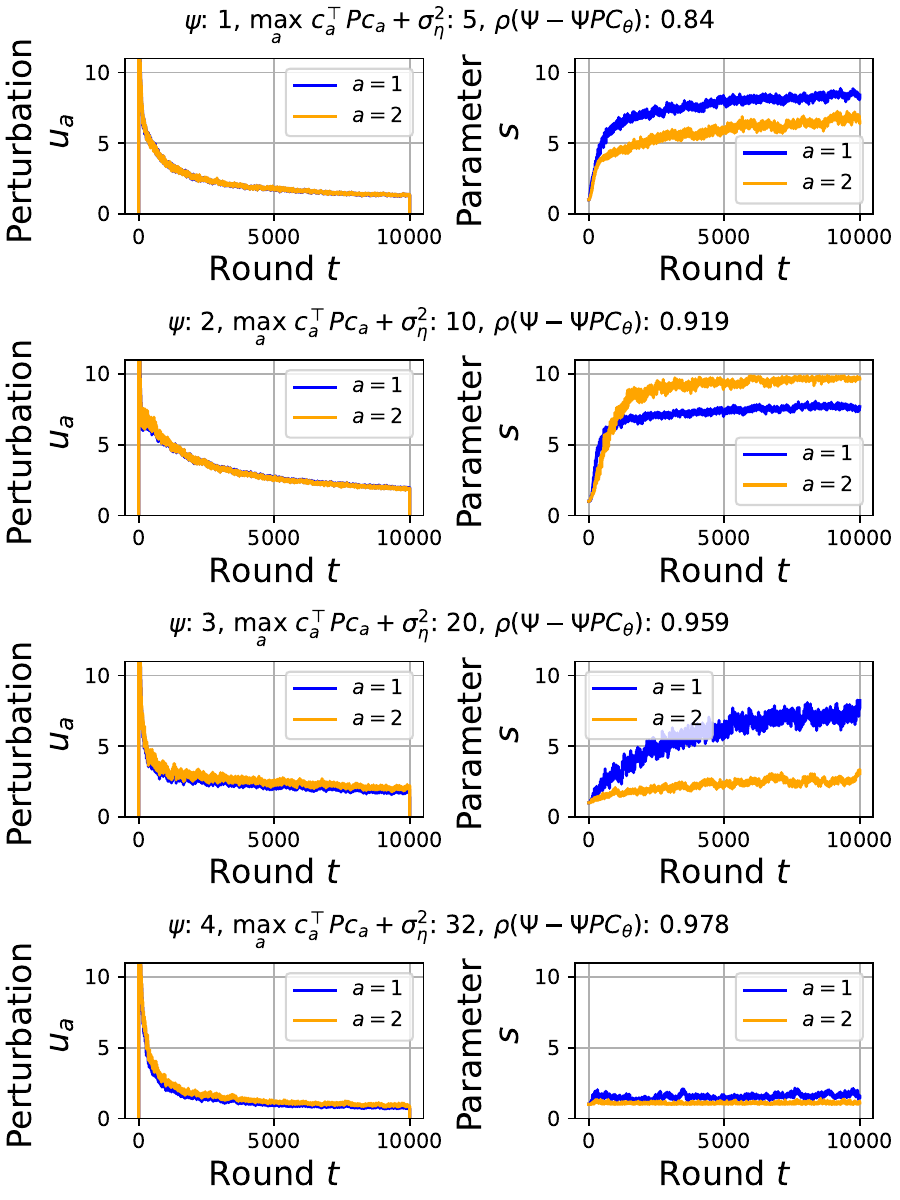}
    \caption{Plots on the left are average $u_a$ values for actions $a = 1,2$. Plots on the right are average chosen parameter $s$ values for actions $a = 1,2$. }
    \label{figure:perturbation}
\end{figure}

\begin{figure}[ht]
    \centering
    \includegraphics[width=\linewidth]{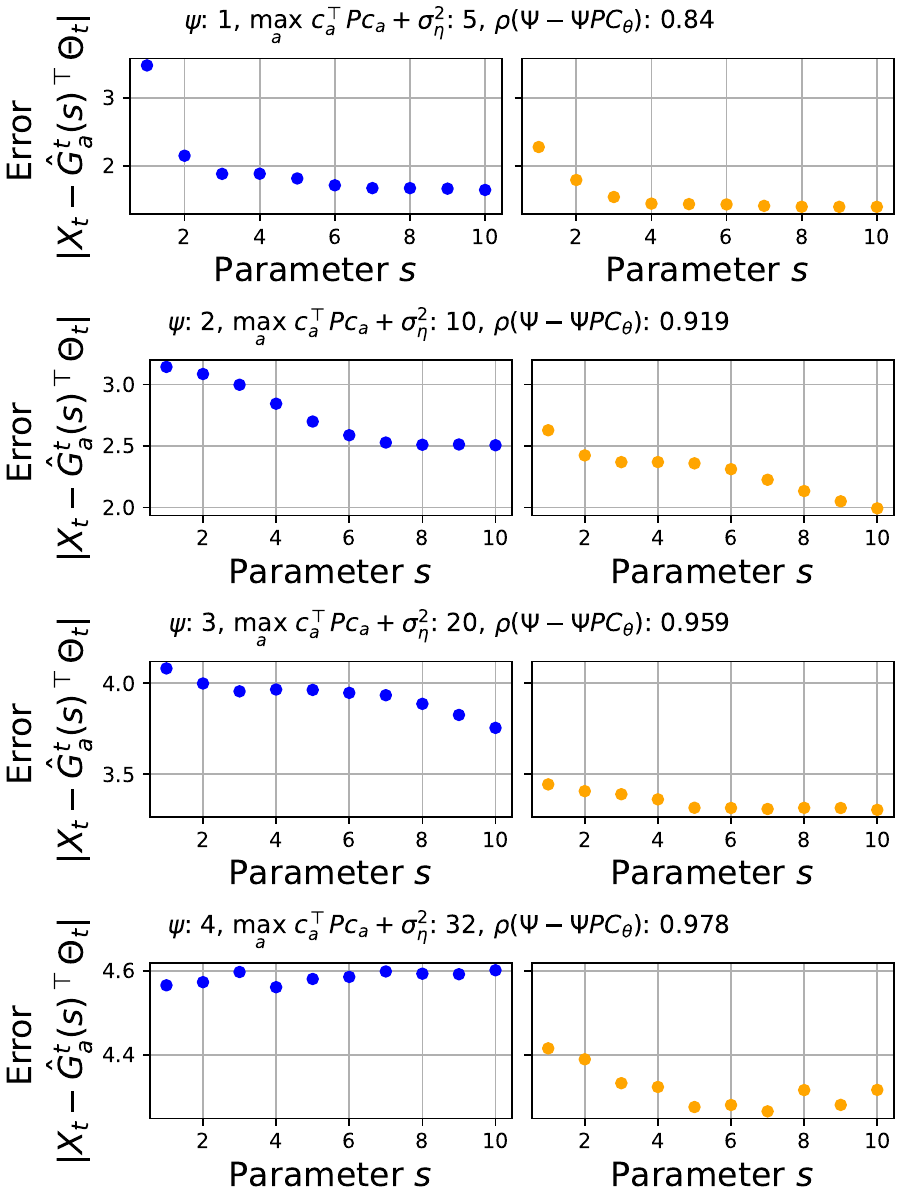}
    \caption{Plots on the left are average errors $\left\vert X_t - \hat{G}_a^t\left(s\right)^\top \Theta_t \right\vert$ for actions $a = 1$ (left plot in yellow) and $a = 2$ (right plot in blue). }
    \label{figure:parameter_s}
\end{figure}




\section{Conclusion}\label{sec:Conclusion}
This work has provided a perspective for extracting representations from non-stationary environments to predict rewards for each action. The results in this paper provided an algorithm that adaptively explores environments with rewards generated by linear dynamical systems. This is accomplished by utilizing an adaptive exploration method inspired by the UCB algorithm and also adjusting the size of the linear predictor for predicting rewards for each action. Future work will focus on cases where the context $\theta_t$ may not be available at different time periods or just absent.

\appendices

\setupautartappendix

\section{Appendix Overview}

To help with the flow of discussion in the main text, we provided all the technical details on the algorithm ARES in the appendix. In this section, we first discuss the bounds on the model error of $\hat{G}_a^t\left(s\right)$ and the prediction error of $\hat{G}_a^t\left(s\right)^\top \Theta_t$. Next, we provide a bound on the regret if actions are selected according to \eqref{eq:perturbation_method}. This provides a methodology for designing $u_a \in \mathbb{R}$ by utilizing this regret bound. Finally, to quantify how well action-selection method \eqref{eq:perturbation_method} performs with respect to the tuned $u_a \in \mathbb{R}$ from the provided analysis, we provide a lower bound, which is a measure of how difficult it is to consistently select the optimal action $c_a \in \mathcal{A}$.

\section{STEP 1 and STEP 2 for Designing the Perturbation}\label{appendix:step_1_and_step_2}

For this section, we will provide the technical details for addressing \textbf{STEP 1} and \textbf{STEP 2}. Recall that \textbf{STEP 1} is deriving a bound on the model error of $\hat{G}_a^t\left(s\right)$ while \textbf{STEP 2} is deriving a bound on the prediction error of $\hat{G}_a^t\left(s\right)^\top \Theta_t$. 

\subsection{STEP 1}\label{appendix:step_1}

To provide a bound on the model error of $\hat{G}_a^t\left(s\right)$, first we need to provide a bound on $\left\Vert\mathbf{B}_{\mathscr{T}_a}\right\Vert_2$. The lemma below will be used for proving the bound of $\left\Vert\mathbf{B}_{\mathscr{T}_a}\right\Vert_2$. 

\begin{lemma}\label{lemma:subgaussian_half-normal}
    Let $X \in \mathbb{R}^k$ be a normally distributed random variable, i.e. $X \sim \mathcal{N}\left(0,\Sigma\right)$. Define $\tilde{c} > 0$ to be an absolute constant and $K = \sqrt{\left(1/2\right)\log\left(2\right)}$. The random variable $X^\top X$ bounded by the function $g_{\Sigma}\left(\delta\right)$ with a probability of at least $1-\delta$, 
    \begin{multline}\label{eq:g_sigma_def}
        g_{\Sigma}\left(\delta\right) \triangleq \mbox{tr}\left(\Sigma \right) + \\ \max \left\{\sqrt{\frac{K^4\left\Vert \Sigma \right\Vert_{HS}^2}{\tilde{c}}\log\left(\frac{2}{\delta}\right)}, 
            \frac{K^2\left\Vert \Sigma \right\Vert_2}{\tilde{c}}\log\left(\frac{2}{\delta}\right)\right\}. 
    \end{multline}
\end{lemma}

\begin{proof}
    First, we will prove that $X^\top X$ is bounded by a half-normally distributed random variable. Note the following equations
    \begin{equation*}
        X^\top X = \left( \Sigma^{1/2} Y\right)^\top \left( \Sigma^{1/2} Y\right), 
    \end{equation*}
    \begin{equation*}
        \Rightarrow X^\top X = Y^\top \Sigma Y, 
    \end{equation*}
    where $Y \sim \mathcal{N}\left(0,I_k \right)$ and $\Lambda \triangleq \mbox{diag}\left(\lambda_1,\dots,\lambda_k\right)$ is the diagonal matrix with the eigenvalues $\lambda_1,\dots,\lambda_k$ associated with the eigenvectors $V = \left(v_1,\dots,v_k\right)$ of the covariance matrix $\Sigma$. To bound the norm, we will use the Hanson-Wright inequality \cite{vershynin2018high}. First, we neeed to find the following constant $K \geq 0$ such that 
    \begin{equation*}
        K \triangleq \max_i\inf \left\{ \epsilon \mid \mathbb{E}\left[\exp\left(Y_i^2/\epsilon^2\right)\right] \leq 2 \right\}, 
    \end{equation*}
    where
    \begin{equation*}
        \mathbb{E}\left[\exp\left(Y_i^2/\epsilon^2\right)\right] = \exp\left(\frac{1}{2\epsilon^2 }\right)
    \end{equation*}
    Therefore, to find $K$, we have the following equations
    \begin{align}
        \mathbb{E}\left[\exp\left(Y_i^2/\epsilon^2\right)\right] & \leq 2 \nonumber\\
        \exp\left(\frac{1}{2\epsilon^2 }\right) & \leq 2 \nonumber\\
        \frac{1}{2\epsilon^2 } & \leq \log\left(2\right) \nonumber\\
        2\epsilon^2 & \geq \log\left(2\right)^{-1} \nonumber, 
    \end{align}
    \begin{equation*}
        \Rightarrow \epsilon \geq \sqrt{\frac{1}{2\log\left(2\right)}} \nonumber. 
    \end{equation*}    

    Therefore, $K = \sqrt{\left(1/2\right)\log\left(2\right)}$. Next, the Hanson-Wright inequality is the following 
    \begin{multline}
        P\left(Y^\top \Sigma Y \geq \mathbb{E}\left[Y^\top \Sigma Y\right] + \epsilon\right) \leq \\ 2\exp\left(-\tilde{c}\min\left\{\frac{\epsilon^2}{K^4\left\Vert \Sigma \right\Vert_{HS}^2},\frac{\epsilon}{K^2\left\Vert \Sigma \right\Vert_2}\right\}\right), 
    \end{multline}
    where $\tilde{c} > 0$ is an absolute constant. Now let $\delta$ be set such that 
    \begin{equation*}
        \delta = 2\exp\left(-\tilde{c}\min\left\{\frac{\epsilon^2}{K^4\left\Vert \Sigma \right\Vert_{HS}^2},\frac{\epsilon}{K^2\left\Vert \Sigma \right\Vert_2}\right\}\right), 
    \end{equation*}
    \begin{align}
        \delta & = 2\exp\left(-\tilde{c}\min\left\{\frac{\epsilon^2}{K^4\left\Vert \Sigma \right\Vert_{HS}},\frac{\epsilon}{K^2\left\Vert \Sigma \right\Vert_2}\right\}\right)\nonumber \\
        \frac{1}{\tilde{c}}\log\left(\frac{2}{\delta}\right) & = \min\left\{\frac{\epsilon^2}{K^4\left\Vert \Sigma \right\Vert_{HS}^2},\frac{\epsilon}{K^2\left\Vert \Sigma \right\Vert_2}\right\}\nonumber, 
    \end{align}
    \begin{multline}
        \Rightarrow \epsilon = \\ \max \left\{\sqrt{\frac{K^4\left\Vert \Sigma \right\Vert_{HS}^2}{\tilde{c}}\log\left(\frac{2}{\delta}\right)}, 
            \frac{K^2\left\Vert \Sigma \right\Vert_2}{\tilde{c}}\log\left(\frac{2}{\delta}\right)\right\}. 
    \end{multline}

    Therefore, with a probability of at least $1-\delta$, the random variable $X^\top X$ is upper bounded by $g_{\Sigma}\left(\delta\right)$ defined in \eqref{eq:g_sigma_def}. 
    
\end{proof}

We can now prove the bound for the bias $\left\Vert\mathbf{B}_{\mathscr{T}_a}\right\Vert_2$. 

\begin{lemma}\label{lemma:bias_error}
Let $\hat{G}_a^t\left(s\right)$ and $G_{a}\left(s\right)$ be as in \eqref{eq:G_a} and \eqref{eq:identify_2}, respectively. Then, there exists $\delta\in(0,1)$ such that $\mathbf{O}_{\mathscr{T}_a}\mathbf{B}_{\mathscr{T}_a}^\top$ satisfies the following inequality with a probability of at least $1-\delta$: 
\begin{multline}\label{eq:bias_bound_matrix}
    \left\Vert\mathbf{O}_{\mathscr{T}_a}\mathbf{B}_{\mathscr{T}_a}^\top \right\Vert_{V_a^t\left(s\right)^{-1}} \leq \\ \sqrt{N_a g_{\Sigma_{z_t}}\left(\delta/N_a\right)}B_{\beta}\left(s\right)\sqrt{\mbox{tr}\left(I_{ms+1} - \lambda V_a^t\left(s\right)^{-1}\right)} , 
\end{multline} 
where $B_{\beta}\left(s\right)$ is defined to be \eqref{eq:B_beta_def} and $B_c$ is as in Assumption \ref{assum:action_bound}.

\end{lemma}

\begin{proof}
According to \eqref{eq:matrix_definitions}, we have:
\begin{align}\label{eq:B_T_norm}        \left\Vert\mathbf{O}_{\mathscr{T}_a}\mathbf{B}_{\mathscr{T}_a}^\top \right\Vert_{V_a^t\left(s\right)^{-1}} = &\left\Vert \sum_{t \in \mathscr{T}_a} \Theta_t \beta_a^t \right\Vert_{V_a^t\left(s\right)^{-1}}  \nonumber\\
=&\left\Vert \sum_{t \in \mathscr{T}_a} V_a^t \left( s\right)^{-1/2}\Theta_t \beta_a^t \right\Vert_2,
\end{align}
which according to the triangle inequality and Cauchy-Schwarz inequality implies that:
 \begin{equation}
\left\Vert\mathbf{O}_{\mathscr{T}_a}\mathbf{B}_{\mathscr{T}_a}^\top \right\Vert_{V_a^t\left(s\right)^{-1}} \leq \sum_{t \in \mathscr{T}_a}\left\vert\beta_a^t \right\vert\sqrt{\Theta_{t}^\top V_a^t\left(s\right)^{-1}\Theta_{t}} \label{eq:B_T_norm_2}. 
\end{equation} 

According to \eqref{eq:residual_error_beta}, Assumption \ref{assum:action_bound}, and Remark \ref{remark:MagnitudeofBeta}, $\left\vert\beta_a^t \right\vert$ can be upper-bounded as follows:
\begin{align}
    \left\vert\beta_a^t \right\vert \leq \left\Vert c_a \right\Vert_2 \left\Vert (\Gamma - \Gamma K C_\theta)^s \right\Vert_2 \left\Vert \hat{z}_{t-s} \right\Vert_2,
\end{align}
which according to Assumption \ref{assum:action_bound} and Lemma \ref{lemma:subgaussian_half-normal}, it can be shown that the following inequality holds with a probability of at least $1-\delta$:
\begin{equation}\label{eq:MMM1}
    \left\vert\beta_a^t \right\vert \leq B_c \left\Vert(\Gamma - \Gamma K C_\theta)^s \right\Vert_2 \sqrt{g_{\Sigma_{\hat{z}_{t-s}}}\left(\delta\right)}, 
\end{equation} 
where $\Sigma_{\hat{z}_{t-s}} \triangleq \mathbb{E}\left[\hat{z}_{t-s}\hat{z}_{t-s}^\top \right]$.

 In what follows, we will provide a bound on $\mathbb{E}\left[\hat{z}_{t-s}\hat{z}_{t-s}^\top \right]$ and consequently $\mathbb{E}\left[\left\Vert \hat{z}_{t-s} \right\Vert_2^2 \right]$. Let $t > 0$ be any arbitrary value. Since $z_t = \hat{z}_t + e_t$, orthogonality principle \cite{Sayed2008} implies that $\mbox{tr}\left(\mathbb{E}\left[\hat{z}_te_t^\top + e_t\hat{z}_t^\top \right]\right) = 0$. Thus, we have: 
\begin{align}
\mathbb{E}\left[z_t z_t^\top \mid \mathscr{F}_{t-1}\right]  = & \mathbb{E}\left[\left(\hat{z}_t + e_t\right)\left(\hat{z}_t + e_t\right)^\top \mid \mathscr{F}_{t-1}\right] \nonumber \\
= &\mathbb{E}\left[\hat{z}_t\hat{z}_t^\top + \hat{z}_t e_t^\top + e_t\hat{z}_t^\top + e_te_t^\top \mid \mathscr{F}_{t-1}\right] \nonumber\\
= &\mathbb{E}\left[\hat{z}_t\hat{z}_t^\top \mid \mathscr{F}_{t-1}\right]  +  \mathbb{E}\left[e_te_t^\top \mid \mathscr{F}_{t-1}\right]\nonumber\\
&+ \mathbb{E}\left[\hat{z}_t e_t^\top + e_t\hat{z}_t^\top \mid \mathscr{F}_{t-1}\right]  \nonumber\\
= &\mathbb{E}\left[\hat{z}_t\hat{z}_t^\top \mid \mathscr{F}_{t-1}\right] + \mathbb{E}\left[e_te_t^\top \mid \mathscr{F}_{t-1}\right] \nonumber,
\end{align}
which implies that $Z_t = \mathbb{E}\left[z_t z_t^\top \right] = \mathbb{E}\left[\mathbb{E}\left[z_t z_t^\top \mid \mathscr{F}_{t-1}\right]\right] = \hat{Z}_t + P_t$, where $\hat{Z}_t \triangleq \mathbb{E}\left[\hat{z}_t\hat{z}_t^\top \right]$ and $P_t \triangleq \mathbb{E}\left[e_te_t^\top \mid \mathscr{F}_{t-1} \right]$. Since $\hat{Z}_t \succeq 0$ and $P_t \succeq 0$, it can be easily shown that 
\begin{equation}
    \sqrt{\mbox{tr}\left(\hat{Z}_t\right)} \leq \sqrt{\mbox{tr}\left(Z_t\right)} \overset{(a)}{\Rightarrow}\sqrt{\mbox{tr}\left(\hat{Z}_{t-s}\right)}  \leq \sqrt{\mbox{tr}\left(Z_{t-s}\right)}  \nonumber, 
\end{equation}
where in $(a)$ we used the implication that $t > 0$ is any arbitrary value. Thus, it is concluded that the following inequality holds with a probability of at least $1-\delta$:
\begin{equation}\label{eq: beta_bound}
    \left\vert\beta_a^t  \right\vert \leq  B_{\beta}\left(s\right)\sqrt{g_{\Sigma_{z_t}}\left(\delta\right)},
\end{equation}
where $B_{\beta}\left(s\right)$ is defined in \eqref{eq:B_beta_def}.

 Regarding the term $\sqrt{\Theta_{t}^\top V_a^t\left(s\right)^{-1}\Theta_{t}} $ in \eqref{eq:B_T_norm_2}, we have:
\begin{align}
& \left\Vert\mathbf{O}_{\mathscr{T}_a}\mathbf{B}_{\mathscr{T}_a}^\top \right\Vert_{V_a^t\left(s\right)^{-1}} \nonumber \\
& ~~ \leq \sum_{t \in \mathscr{T}_a} B_{\beta}\left(s\right)\sqrt{g_{\Sigma_{z_t}}\left(\delta\right)}\sqrt{\mbox{tr}\left(V_a^t\left(s\right)^{-1}\Theta_{t}\Theta_{t}^\top \right) \cdot 1}  \nonumber \\
& ~~ \overset{(b)}{\leq} \sqrt{N_a} B_{\beta}\left(s\right)\sqrt{g_{\Sigma_{z_t}}\left(\delta\right)}\sqrt{\sum_{t \in \mathscr{T}_a}\mbox{tr}\left(V_a^t\left(s\right)^{-1}\Theta_{t}\Theta_{t}^\top \right)} \nonumber,
\end{align}
\begin{multline}\label{eq:InequalityMehdi1} 
    \Rightarrow \left\Vert\mathbf{O}_{\mathscr{T}_a}\mathbf{B}_{\mathscr{T}_a}^\top \right\Vert_{V_a^t\left(s\right)^{-1}} \leq \\
    \sqrt{N_a} B_{\beta}\left(s\right)\sqrt{g_{\Sigma_{z_t}}\left(\delta\right)}\sqrt{\mbox{tr}\left(V_a^t\left(s\right)^{-1} \sum_{t \in \mathscr{T}_a} \Theta_{t}\Theta_{t}^\top \right)}, 
\end{multline}
where $(b)$ is concluded based on the Cauchy-Schwarz inequality. Since $\sum_{\tau \in \mathscr{T}_a} \Theta_{\tau} \Theta_{\tau}^\top = V_a^t\left(s\right) - \lambda I_{ms+1}$, we can rewrite \eqref{eq:InequalityMehdi1} as follows which is satisfied with a probability of at least $1-N_a\delta$:
\begin{multline}
    \left\Vert\mathbf{O}_{\mathscr{T}_a}\mathbf{B}_{\mathscr{T}_a}^\top \right\Vert_{V_a^t\left(s\right)^{-1}} \leq\sqrt{N_a g_{\Sigma_{z_t}}\left(\delta\right)} B_{\beta}\left(s\right)\cdot \\
    \sqrt{\mbox{tr}\left(V_a^t\left(s\right)^{-1} \left(V_a^t\left(s\right) - \lambda I_{ms+1}\right)\right)} \label{eq:matrix_V_bound}. 
\end{multline} 

Note that the matrix $\sum_{\tau \in \mathscr{T}_a} \Theta_{\tau} \Theta_{\tau}^\top \succeq 0$, implying that  $V_a^t\left(s\right) \succeq \lambda I_{ms+1}$ . Thus, the terms in the square root of inequality \eqref{eq:matrix_V_bound} is always non-negative. According to \eqref{eq: beta_bound} and \eqref{eq:matrix_V_bound} and setting $\delta = \delta/N_a$ no further effort is needed to show that inequality \eqref{eq:bias_bound_matrix} is satisfied with a probability of at least $1-\delta$.
\end{proof}

Using Lemma \ref{lemma:bias_error}, the following theorem completes \textbf{STEP 1}, which is written as Theorem \ref{theorem:model_error}. The proof for Theorem \ref{theorem:model_error} is below.

\begin{proof}
From \eqref{eq:linear_model_time_varying} and \eqref{eq:identify_2}, we have:
\begin{align}
\hat{G}_a^t\left( s\right)^\top =& G_{a}\left(s\right)^\top\mathbf{O}_{\mathscr{T}_a} \mathbf{O}_{\mathscr{T}_a}^\top V_a^t\left( s\right)^{-1} \nonumber\\
&+ \left(\mathbf{B}_{\mathscr{T}_a} + \mathbf{E}_{\mathscr{T}_a}\right)\mathbf{O}_{\mathscr{T}_a}^\top V_a^t\left( s\right)^{-1}.\label{eq:equalityproof1} 
\end{align}

Adding and subtracting the term  $\lambda G_{a}\left(s\right)^\top V_a^t\left(s\right)^{-1}$ from the right-hand side of \eqref{eq:equalityproof1} yields:
\begin{align}
\hat{G}_a^t\left( s\right)^\top =& G_{a}\left(s\right)^\top\left(\mathbf{O}_{\mathscr{T}_a} \mathbf{O}_{\mathscr{T}_a}^\top + \lambda I\right)V_a^t\left( s\right)^{-1} \nonumber\\
&+ \left(\mathbf{B}_{\mathscr{T}_a} + \mathbf{E}_{\mathscr{T}_a}\right)\mathbf{O}_{\mathscr{T}_a}^\top V_a^t\left( s\right)^{-1} \nonumber\\
&- \lambda G_{a}\left(s\right)^\top V_a^t\left( s\right)^{-1}, \nonumber 
\end{align}
which according to the fact that  $V_a^t\left(s\right) =\lambda I + \mathbf{O}_{\mathscr{T}_a} \mathbf{O}_{\mathscr{T}_a}^\top$, implies that:
\begin{multline}
\hat{G}_a^t\left( s\right)^\top  - G_{a}\left(s\right)^\top = - \lambda G_{a}\left(s\right)^\top V_a^t\left( s\right)^{-1} \\ + \left(\mathbf{B}_{\mathscr{T}_a} + \mathbf{E}_{\mathscr{T}_a} \right) \mathbf{O}_{\mathscr{T}_a}^\top V_a^t\left( s\right)^{-1}.\label{eq:equality1}
\end{multline}

Multiplying both sides of this last equation by $V_a^t\left(s\right)$ from right yields:
\begin{multline}
\Big(\hat{G}_a^t\left(s\right)  - G_{a}\left(s\right) \Big)^\top V_a^t\left(s\right) \\ 
= \left(- \lambda G_{a}\left(s\right) + 
\mathbf{O}_{\mathscr{T}_a}\left(\mathbf{B}_{\mathscr{T}_a} + \mathbf{E}_{\mathscr{T}_a} \right)^\top \right)^\top. \label{eq:equality2}
\end{multline}

Multiplying the right- and left-hand sides of \eqref{eq:equality2} by transpose of the right- and left-hand sides of \eqref{eq:equality2}, respectively, and according to the fact that  $V_a^t\left(s\right)^{-1}=\left(V_a^t\left(s\right)^{-1}\right)^\top$, we obtain the following:
\begin{multline}\label{eq:Mehdi1}
\Big\Vert\hat{G}_a^t \left( s\right) - G_{a}\left(s\right)\Big\Vert_{V_a^t\left(s\right)}  \\ 
=\left\Vert- \lambda G_{a}\left(s\right) + \mathbf{O}_{\mathscr{T}_a}\left(\mathbf{B}_{\mathscr{T}_a} + \mathbf{E}_{\mathscr{T}_a} \right)^\top \right\Vert_{V_a^t\left(s\right)^{-1}} .
\end{multline}

which implies that\footnote{Given $z_1,z_2\in\mathbb{R}^n$ and $Q\succ0$ ($Q\in\mathbb{R}^{n\times n}$), we have $\left\Vert z_1+z_2\right\Vert_Q^2=z_1^\top Q^{\frac{1}{2}}Q^{\frac{1}{2}}z_1+z_2^\top Q^{\frac{1}{2}}Q^{\frac{1}{2}}z_2+2z_1^\top Q^{\frac{1}{2}}Q^{\frac{1}{2}}z_2\leq\left\Vert Q^{\frac{1}{2}}z_1\right\Vert^2+\left\Vert Q^{\frac{1}{2}}z_2\right\Vert^2+2\left\Vert Q^{\frac{1}{2}}z_1\right\Vert\left\Vert Q^{\frac{1}{2}}z_2\right\Vert$, which implies that $\left\Vert z_1+z_2\right\Vert_Q^2\leq\left(\left\Vert Q^{\frac{1}{2}}z_1\right\Vert+\left\Vert Q^{\frac{1}{2}}z_2\right\Vert\right)^2$, or $\left\Vert z_1+z_2\right\Vert_Q^2\leq\left(\left\Vert z_1\right\Vert_Q+\left\Vert z_2\right\Vert_Q\right)^2$. Thus, we have $\left\Vert z_1+z_2\right\Vert_Q\leq\left\Vert z_1\right\Vert_Q+\left\Vert z_2\right\Vert_Q$.}
\begin{multline}\label{eq:model_error}        
\left\Vert\hat{G}_a^t \left( s\right) - G_{a}\left(s\right)\right\Vert_{V_a^t\left(s\right)}  \leq \left\Vert\lambda G_{a}\left(s\right) \right\Vert_{V_a^t\left(s\right)^{-1}}  \\ + \left\Vert\mathbf{O}_{\mathscr{T}_a}\mathbf{B}_{\mathscr{T}_a}^\top \right\Vert_{V_a^t\left(s\right)^{-1}} + \left\Vert \mathbf{O}_{\mathscr{T}_a}\mathbf{E}_{\mathscr{T}_a} ^\top \right\Vert_{V_a^t\left(s\right)^{-1}} . 
\end{multline}

Since $\varepsilon_{a}^t$ is conditionally $B_R$-subgaussian on $\mathscr{F}_{t-1}$ and $\Theta_{t}$ is $\mathscr{F}_{t-1}$ measurable, it can be concluded that \cite[Theorem 1]{NIPS2011_e1d5be1c} there exists $\delta \in (0,1)$ such that the following inequality is satisfied with a probability of at least $1-\delta$:
\begin{multline}\label{eq:boundproof1}
\left\Vert \mathbf{O}_{\mathscr{T}_a}\mathbf{E}_{\mathscr{T}_a}^\top \right\Vert_{V_a^t\left(s\right)^{-1}} \leq \\ \sqrt{2B_R^2\log\left(\frac{1}{\delta}\frac{ \det(V_a^t\left(s\right))^{1/2}}{\det(\lambda I)^{1/2}}\right)} .
\end{multline}

Finally, according to Assumption \ref{assum:G_a_Assumption}, we have:
\begin{align}\label{eq:InequalityMehdi2}
\left\Vert\lambda G_{a}\left(s\right) \right\Vert_{V_a^t\left(s\right)^{-1}} & \leq \lambda \left\Vert V_a^t\left( s\right)^{-1/2} \right\Vert_2\left\Vert G_{a}\left(s\right) \right\Vert_2 \nonumber \\
& \leq \lambda B_G  \sqrt{\mbox{tr}\left(V_a^t\left(s\right)^{-1}\right)} . 
\end{align}

Thus, from Lemma \ref{lemma:bias_error}, and according to inequalities \eqref{eq:boundproof1} and \eqref{eq:InequalityMehdi2}, one can obtain the bound given in \eqref{eq:probabilistic_bound}, which completes the proof.  
\end{proof}

\subsection{STEP 2}\label{appendix:step_2}

Using results from \textbf{STEP 1}, Lemma \ref{lemma:optimal_value} will complete \textbf{STEP 2}. The proof for Lemma \ref{lemma:optimal_value} is below.

\begin{proof}
Consider the following optimization problem 
\begin{equation}\label{eq:optimism_optim}
    \begin{array}{cc}
        \underset{\tilde{G} \in \mathbb{R}^{(ms+1) \times 1}}{\max} & \tilde{G}^\top  \Theta_{t} \\
            \mbox{ s.t. } & \tilde{G} \in \mathscr{C}_{a,s}
        \end{array},
\end{equation}
where the set $\mathscr{C}_{a,s}$ is defined as follows: 
\begin{equation}\label{eq:DefinitionMehdi1}
\mathscr{C}_{a,s} = \left\{\tilde{G}\big|\left\Vert \hat{G}_a^t \left( s\right) - \tilde{G} \right\Vert_{V_a^t\left(s\right)}^2  \leq b_a^t(\delta,s)^2\right\},
\end{equation}
with $b_a^t\left(\delta,s\right)$ as in \eqref{eq:probabilistic_bound}. The Lagrangian $L(\tilde{G},\epsilon)$ for optimization problem \eqref{eq:optimism_optim} is 
\begin{multline}
L\left(\tilde{G},\epsilon\right) = \tilde{G}^\top \Theta_{t} \\ + \epsilon \left(\left\Vert\hat{G}_a^t \left( s\right) - \tilde{G}\right\Vert_{V_a^t\left(s\right)}^2- b_a^t \left(\delta,s\right)^2\right),
\end{multline}
where $\epsilon$ is Lagrange multiplier.


The partial derivatives of $L\left(\tilde{G},\epsilon\right)$ are 
\begin{align}
\frac{\partial L(\cdot)}{\partial \tilde{G}^\top} = \Theta_{t}^\top + 2\epsilon \tilde{G}^\top V_a^t \left( s\right)-2\epsilon \hat{G}_a^t\left( s\right)^\top V_a^t \left( s\right)\nonumber,
\end{align}
and
\begin{align}
\frac{\partial L(\cdot)}{\partial \epsilon}  = \left\Vert\hat{G}_a^t \left( s\right) - \tilde{G}\right\Vert_{V_a^t\left(s\right)} ^2-  b_a^t \left(\delta,s\right)^2.\label{eq:partial_mu}
\end{align}

Solving $\partial L\left(\tilde{G},\epsilon\right)/\partial \tilde{G}^\top = 0$ for $\tilde{G}$ yields 
\begin{align}
\Theta_{t}^\top + 2\epsilon \tilde{G}^\top V_a^t \left( s\right) - 2\epsilon \hat{G}_a^t\left( s\right)^\top V_a^t \left( s\right) = 0,
\end{align}
which implies that
\begin{align}
\tilde{G}^\top = \hat{G}_a^t\left( s\right)^\top -  \frac{\Theta_t^\top V_a^t\left(s\right)^{-1}}{2\epsilon} . \label{eq:mu_solved}
\end{align}

Using $\tilde{G}$ given in \eqref{eq:mu_solved} in \eqref{eq:partial_mu}, and solving $\partial L\left(\tilde{G},\epsilon\right)/\partial \epsilon = 0$ for $\epsilon$, and noting that $\epsilon$ is non-negative, yields the optimal value of $\epsilon$:
\begin{equation}
\epsilon = \frac{\sqrt{\Theta_t^\top V_a^t\left(s\right)^{-1}\Theta_t}}{2b_a^t\left(\delta,s\right)} .
\end{equation}

Plugging the obtained $\epsilon$ in \eqref{eq:mu_solved} yields
\begin{align}
    \tilde{G}^\top  & = \hat{G}_a^t\left( s\right)^\top  -  \frac{\Theta_t^\top V_a^t\left(s\right)^{-1}}{2\epsilon} \nonumber \\
    & = \hat{G}_a^t\left( s\right)^\top   + \frac{\Theta_t^\top V_a^t\left(s\right)^{-1}}{2} \left( \frac{\sqrt{\Theta_t^\top V_a^t\left(s\right)^{-1}\Theta_t}}{2b_a^t\left(\delta,s\right)} \right)^{-1},  \nonumber
\end{align}
where this solution lies on the edge of the constrained set $\mathscr{C}_{a,s}$ in optimization problem \eqref{eq:optimism_optim}. Since this optimal value lies on the edge of the confidence interval $\mathscr{C}_{a,s}$ given in \eqref{eq:DefinitionMehdi1}, then with a probability of at least $1-2\delta$, we have 
\begin{multline}
    \hat{G}_a^t\left( s\right)^\top \Theta_t - b_a^t\left(\delta,s\right) \sqrt{\Theta_{t}^\top V_a^t\left(s\right)^{-1} \Theta_{t}}  \leq G_{a}\left(s\right)^\top \Theta_t \\ \leq \hat{G}_a^t\left( s\right)^\top \Theta_t + b_a^t\left(\delta,s\right) \sqrt{\Theta_{t}^\top V_a^t\left(s\right)^{-1} \Theta_{t}} \nonumber,
\end{multline}
which yields the upper-bound \eqref{eq:InequalityJon} and completes the proof.

\end{proof}

\section{Proofs for Theorem \ref{theorem:regret} and Corollary \ref{corollary:regret_bound}}\label{appendix:regret_bounds}

To provide more details on the regret bound derived in Theorem \ref{theorem:regret} and Corollary \ref{corollary:regret_bound}, we provide a more comprehensive theorem and corollary with their proofs written below. 

\begin{theorem}\label{theorem:regret_2}
    Assume that $n > 0$ is finite. Let actions $a \in [k]$ be selected based on \eqref{eq:perturbation_method} where is defined as \eqref{eq:u_a_b}, $\mathbf{b}_a^t\left(s\right)$ is defined as \eqref{eq:b_def}, and $\mathbf{B}_{\beta} \geq B_{\beta}\left(s\right)$. Regret $\sum_{t=1}^n X_t^* - X_t$ has the following  upper-bound with a probability of at least $1- 9\delta$: 
        \begin{multline}\label{eq:regret_bound_OFU}
            R_n^{(\mathbf{b})} \leq
            k\sqrt{n}\overline{B}_{\theta}^\delta\left(n\right)\overline{B}_u^\delta\left(n\right)\left(\overline{B}_{(\mathbf{b})}^\delta\left(n\right)+ \overline{\mathbf{B}}_{(\mathbf{b})}^\delta\left(n\right)\right) \\ + 2knB_{\beta}\left(s\right)\sqrt{2\log\left(n/\delta\right)}  + k\sqrt{4n B_R^2 \log\left(1/\delta\right)}, 
        \end{multline}
        
        The variables $\overline{B}_{\theta}^\delta\left(n\right)$, $\overline{B}_u^\delta\left(n\right)$, $\overline{B}_{(\mathbf{b})}^\delta\left(n\right)$, $\overline{\mathbf{B}}_{(\mathbf{b})}^\delta\left(n\right)$, and $L_{\Theta}\left(\delta\right)$ are defined as
        \begin{multline}\label{eq:b_theta_bound_def}
            \overline{B}_{\theta}^\delta\left(n\right)\triangleq \\ \sqrt{\frac{2\left(\mbox{tr}\left(\mathbf{C}_s \hat{Z}_n \mathbf{C}_s^\top + \Sigma_E \right)+1\right)\log\left(n/\delta\right)}{\lambda}}, 
        \end{multline}
        \begin{multline}\label{eq:b_u_bound}
            \overline{B}_u^\delta\left(n\right) \triangleq \\ \sqrt{2\left(ms+1\right)\log\left(\frac{\lambda\left(ms+1\right) +  nL_{\Theta}\left(\delta\right)}{\lambda\left(ms+1\right)}\right)}, 
        \end{multline}
        \begin{multline}\label{eq:large_b_bound}
            \overline{B}_{(\mathbf{b})}^\delta\left(n\right) \triangleq \\ \sqrt{2B_R^2\log\left(\frac{n}{\delta}\left(\frac{\lambda \left(ms+1\right) + n L_{\Theta}\left(\delta\right)}{\lambda \left(ms+1\right)}\right)^{\frac{ms+1}{2}}\right)} \\+ \sqrt{n\left(ms+1\right)g_{\Sigma_{z_t}}\left(\delta/n\right)} B_{\beta}\left(s\right)\\+ \lambda B_G \sqrt{\frac{ms+1}{\lambda}} , 
        \end{multline}
        \begin{multline}\label{eq:large_b_bound_bold}
                \overline{\mathbf{B}}_{(\mathbf{b})}^\delta\left(n\right) \triangleq \\ \sqrt{2B_R^2\log\left(\frac{n}{\delta}\left(\frac{\lambda \left(ms+1\right) + n L_{\Theta}\left(\delta\right)}{\lambda \left(ms+1\right)}\right)^{\frac{ms+1}{2}}\right)} \\+ \sqrt{n\left(ms+1\right)g_{\Sigma_{z_t}}\left(\delta/n\right)} \mathbf{B}_{\beta} \\ + \lambda B_G \sqrt{\frac{ms+1}{\lambda}} , 
        \end{multline}
        \begin{equation}
            L_{\Theta}\left(\delta\right) \triangleq n\frac{\mbox{tr}\left(\Sigma_E + \mathbf{C}_s\hat{Z}_n \mathbf{C}_s\right)+1}{\delta}  \label{eq:L_bound}, 
        \end{equation}
    and terms $\Sigma_E$ and $\mathbf{C}_s \hat{Z}_t\mathbf{C}_s^\top$ are defined as follows:    
    \begin{align}
        \Sigma_E & \triangleq \mathbf{F}_s\begin{bmatrix}
            \Sigma_\theta & \dots & \mathbf{0} \\
            \vdots & \ddots & \vdots \\
            \mathbf{0} & \dots & \Sigma_\theta
        \end{bmatrix}\mathbf{F}_s^\top \in \mathbb{R}^{ms \times ms} \label{eq:sigma_e_def} \\
        \Sigma_\theta & \triangleq C_\theta P C_\theta^\top + R \label{eq:theta_residual} \\
        \hat{Z}_t & \triangleq \mathbb{E}\left[\hat{z}_t \hat{z}_t^\top \right]\label{eq:sigma_zhat} \\
        \hat{Z}_{t+1} & = \Gamma \hat{Z}_t \Gamma^\top + \Gamma K \Sigma_\theta K^\top \Gamma^\top  \nonumber, 
    \end{align} 
    \begin{align}
        \mathbf{F}_s & \triangleq \begin{bmatrix}
            I_m & \mathbf{0}_{m \times m} & \dots & \mathbf{0}_{m \times m} & \mathbf{0}_{m \times m} \\
            C_\theta \Gamma K & I_m & \dots & \mathbf{0}_{m \times m} & \mathbf{0}_{m \times m} \\
            \vdots & \vdots & \ddots & \vdots & \vdots \\
            C_\theta\Gamma^{s-3}K & C_\theta\Gamma^{s-4}K & \dots & I_m & \mathbf{0}_{m \times m} \\
            C_\theta\Gamma^{s-2}K & C_\theta\Gamma^{s-3}K & \dots & C_\theta\Gamma K & I_m
        \end{bmatrix} \nonumber \\
        \mathbf{C}_s & \triangleq \begin{bmatrix}
            C_\theta^\top &
            \Gamma^\top C_\theta^\top  &
            \dots &
            \left(\Gamma^{s-2}\right)^\top C_\theta^\top  &
            \left(\Gamma^{s-1}\right)^\top C_\theta^\top 
        \end{bmatrix}^\top \nonumber. 
    \end{align}

    Therefore, the upper bound for regret increases at the following rate \eqref{eq:regret_rate_increase} with a probability of at least $1 - 9\delta$.  
\end{theorem} 

\begin{proof}
According to \eqref{eq:Matrix_Form_2}, instantaneous regret $r_t^{(\mathbf{b})} \triangleq X_t^* - X_t$ has the following expression
\begin{align}
        r_t^{(\mathbf{b})} =& X_t^\ast-X_t\nonumber\\
        =&G_{a^*}\left(s\right)^\top \Theta_t - G_a\left(s\right)^\top \Theta_t + \beta_{a^*}^t - \beta_a^t \nonumber \\
        & ~~~~~~~~~~~~~~~~~~ + \varepsilon_{a^*}^t - \varepsilon_a^t. \label{eq:inequality_regret_1}
\end{align}

First, let us prove the regret bound for when the learner chooses actions $a \in [k]$ based on \eqref{eq:perturbation_method} where $u_a \equiv u_a^{(\mathbf{b})}$ \eqref{eq:u_a_b}: On the one hand, based on Lemma \ref{lemma:optimal_value}, with a probability of at least $1-2\delta$, the  following inequality is satisfied:
\begin{multline}
G_{a^*}\left(s\right)^\top \Theta_t \leq \hat{G}_{a^*}^t\left(s\right)^\top \Theta_t \\ + \mathbf{b}_{a^*}^t\left(\delta,s\right)\sqrt{\Theta_{t}^\top V_{a^*}^t\left(s\right)^{-1} \Theta_{t}}, \nonumber
\end{multline}
\begin{equation}\label{eq:upper_bound_ofu}
    \Rightarrow G_{a^*}\left(s\right)^\top \Theta_t \leq \hat{G}_{a^*}^t\left(s\right)^\top \Theta_t + u_{a^*}^{(\mathbf{b})}. 
\end{equation}

On the other hand, action $a$ is chosen because of the following relationship 
\begin{equation}
\hat{G}_{a^*}^t\left(s\right)^\top \Theta_t + u_{a^*}^{(\mathbf{b})} \leq 
\hat{G}_{a}^t\left(s\right)^\top \Theta_t + u_a^{(\mathbf{b})}. \nonumber
\end{equation}

Therefore, inequality \eqref{eq:inequality_regret_1} can be upper-bounded by 
    \begin{multline}\label{eq:inequality_regret_2}
        r_t^{(\mathbf{b})} \leq \hat{G}_{a}^t\left(s\right)^\top \Theta_t + u_a^{(\mathbf{b})} \\ - G_a\left(s\right)^\top \Theta_t + \beta_{a^*}^t - \beta_a^t + \varepsilon_{a^*}^t - \varepsilon_a^t .
    \end{multline}

    By Lemma \ref{lemma:bias_error}, the difference $\hat{G}_{a}^t\left(s\right)^\top \Theta_t -  G_a\left(s\right)^\top \Theta_t$ has the following upper bound with a probability of at least  $1-2\delta$
    \begin{equation}
        \hat{G}_{a}^t\left(s\right)^\top \Theta_t -  G_a\left(s\right)^\top \Theta_t \leq b_a^t\left(\delta,s\right)\sqrt{\Theta_{t}^\top V_a^t\left(s\right)^{-1} \Theta_{t}}\nonumber,
    \end{equation}
    adjusting bound \eqref{eq:inequality_regret_2} to be 
    \begin{multline}\label{eq:inequality_single_terms}
        r_t^{(\mathbf{b})} \leq b_a^t\left(\delta,s\right)\sqrt{\Theta_{t}^\top V_a^t\left(s\right)^{-1} \Theta_{t}} + u_a^{(\mathbf{b})} \\ + \beta_{a^*}^t - \beta_a^t + \varepsilon_{a^*}^t - \varepsilon_a^t  .
    \end{multline}

To upper bound the term $b_a^t\left(\delta,s\right)\sqrt{\Theta_{t}^\top V_a^t\left(s\right)^{-1} \Theta_{t}}$, first we show that $b_a^t\left(\delta,s\right)\leq \overline{B}_{(\mathbf{b})}^\delta\left(N_a\right)$, where $\overline{B}_{(\mathbf{b})}^\delta\left(N_a\right)$ is as in \eqref{eq:large_b_bound}:
\begin{itemize}
\item First term in $\overline{B}_{(\mathbf{b})}^\delta\left(N_a\right)$: According to the Markov inequality \cite{lattimore2020bandit}, the random variable $\left\Vert \Theta_t \right\Vert_2^2 \leq \mathbb{E}\left[\left\Vert \Theta_t \right\Vert_2^2\right]/\delta \Rightarrow \left\Vert \Theta_t \right\Vert_2^2 \leq L_{\Theta}\left(\delta\right)/n$ with a probability of at least $1-\delta$. Therefore, for all rounds $t = 1,\dots,n$, since $\left\Vert \Theta_{t}\right\Vert_2^2 \leq L_{\Theta}\left(\delta\right)$, then with a probability of at least $1-\delta$, Lemma 19.4 of \cite{lattimore2020bandit} implies that:
\begin{multline}\label{eq:UpperBoundb1}
    2B_R^2\log\left(\frac{n}{\delta}\frac{ \det(V_a^t\left(s\right))^{1/2}}{\det(\lambda I)^{1/2}}\right) \leq 2B_R^2\log\left(n/\delta\right) \\ + 2B_R^2\frac{ms+1}{2}\log\left(\frac{\mbox{tr}\left(\lambda I_{ms+1}\right) + N_a L_{\Theta}\left(\delta\right)}{ms+1}\right) \\ - 2B_R^2\left(\frac{1}{2}\log\det\left(\lambda I_{ms+1}\right)\right);
\end{multline} 
\item Second term in $\overline{B}_{(\mathbf{b})}^\delta\left(N_a\right)$: Since $V_a^t\left(s\right) \succ 0$ implying that $- \lambda V_a^t\left(s\right)^{-1}\prec0$, the following upper-bound can be obtained 
\begin{multline}\label{eq:UpperBoundb2}
    \sqrt{N_a g_{\Sigma_{z_t}}\left(\delta/N_a\right)} B_{\beta}\left(s\right)\sqrt{\mbox{tr}\left(I_{ms+1}- \lambda V_a^t\left(s\right)^{-1}\right)}  \\
    \leq\sqrt{N_a(ms+1)g_{\Sigma_{z_t}}\left(\delta/N_a\right)} B_{\beta}\left(s\right);
\end{multline}

\item Third term in $\overline{B}_{(\mathbf{b})}^\delta\left(N_a\right)$: The matrix $V_a^t \left(s\right) \succeq \lambda I_{ms+1}$, implying that $V_a^t \left(s\right)^{-1} \preceq I_{ms+1}/\lambda$. Therefore, it can be shown that the following inequality is satisfied:
\begin{equation}
\lambda B_G \sqrt{\mbox{tr}\left(V_a^t\left(s\right)^{-1}\right)} \leq \lambda B_G \sqrt{\frac{ms+1}{\lambda}} \nonumber. 
\end{equation}
\end{itemize}

Combining all the terms provides the bound $\overline{B}_{(\mathbf{b})}^\delta\left(N_a\right)$ in \eqref{eq:large_b_bound}. The logic follows for $\mathbf{b}_a^t\left(\delta,s\right)$ \eqref{eq:b_def} in $u_a^{(\mathbf{b})}$ where instead of using $B_{\beta}\left(s\right)$ we use $\mathbf{B}_{\beta}$. Next, using the sum of terms in \eqref{eq:inequality_single_terms} with $\overline{B}_{(\mathbf{b})}^\delta\left(N_a\right)$ \eqref{eq:large_b_bound} and $\overline{\mathbf{B}}_{(\mathbf{b})}^\delta\left(N_a\right)$ \eqref{eq:large_b_bound_bold} in place of $b_a^t\left(\delta,s\right)$ \eqref{eq:probabilistic_bound} and $\mathbf{b}_a^t\left(\delta,s\right)$ \eqref{eq:b_def}, we have
\begin{multline}
    \sum_{t=1}^n r_t^{(\mathbf{b})} \leq \sum_{a \in [k]}\sum_{t \in \mathscr{T}_a} \beta_{a^*}^t - \beta_a^t + \varepsilon_{a^*}^t - \varepsilon_a^t + \\ \sum_{a \in [k]} \left(\overline{B}_{(\mathbf{b})}^\delta\left(N_a\right) + \overline{\mathbf{B}}_{(\mathbf{b})}^\delta\left(N_a\right) \right)\sum_{t \in \mathscr{T}_a} \sqrt{\Theta_t^\top V_a^t\left(s\right)^{-1}\Theta_t} \nonumber. 
\end{multline}

For the summation $\sum_{t \in \mathscr{T}_a} \sqrt{\Theta_t^\top V_a^t\left(s\right)^{-1}\Theta_t}$, we can bound this as follows. First, since $V_a^t\left(s\right) = \lambda I_{ms+1} + \sum_{t \in \mathscr{T}_a} \Theta_t \Theta_t^\top$, then we know that 
\begin{equation}
    \sqrt{\Theta_t^\top V_a^t\left(s\right)^{-1}\Theta_t} \leq \min\left\{\frac{\left\Vert \Theta_t \right\Vert_2}{\sqrt{\lambda}},\sqrt{\Theta_t^\top V_a^t\left(s\right)^{-1}\Theta_t}\right\} \nonumber, 
\end{equation}
\begin{multline}
    \Rightarrow \sum_{t \in \mathscr{T}_a} \sqrt{\Theta_t^\top V_a^t\left(s\right)^{-1}\Theta_t} \\
    \leq \sum_{t \in \mathscr{T}_a} \min\left\{\frac{\left\Vert \Theta_t \right\Vert_2}{\sqrt{\lambda}},\sqrt{\Theta_t^\top V_a^t\left(s\right)^{-1}\Theta_t}\right\} \nonumber.
\end{multline}

Next, according to Lemma \ref{lemma:subgaussian_half-normal}, the random variable $\left\Vert \Theta_t \right\Vert_2$ is bounded by a subgaussian random variable. Therefore, with a probability of at least $1-\delta$, we satisfy the following inequality 
\begin{equation}
    \left\Vert \Theta_t \right\Vert_2
    \leq \sqrt{g_{\Sigma_{\Theta_t}}\left(\delta/n\right)} \nonumber, 
\end{equation}
\begin{multline}
    \Rightarrow \sum_{t \in \mathscr{T}_a} \sqrt{\Theta_t^\top V_a^t\left(s\right)^{-1}\Theta_t} \\
    \leq \sum_{t \in \mathscr{T}_a} \min\left\{\overline{B}_{\theta}^\delta\left(n\right),\sqrt{\Theta_t^\top V_a^t\left(s\right)^{-1}\Theta_t}\right\} \nonumber, 
\end{multline}
where $\overline{B}_{\theta}^\delta\left(n\right)$ is defined to be \eqref{eq:b_theta_bound_def}. Factoring out $\overline{B}_{\theta}^\delta\left(n\right)$ in the summation and using Cauchy-Schwarz inequality provides the following bound for $\sum_{t \in \mathscr{T}_a} \sqrt{\Theta_t^\top V_a^t\left(s\right)^{-1}\Theta_t}$: 
\begin{multline}\label{eq:bound_assumption_for_elliptic}
    \sum_{t \in \mathscr{T}_a} \sqrt{\Theta_t^\top V_a^t\left(s\right)^{-1}\Theta_t} \\
    \leq \overline{B}_{\theta}^\delta\left(n\right)\sqrt{N_a} \sqrt{\sum_{t \in \mathscr{T}_a} \min\left\{1,\Theta_t^\top V_a^t\left(s\right)^{-1}\Theta_t\right\}}. 
\end{multline}

Based on the bound of the sum $\sum_{t \in \mathscr{T}_a} \sqrt{\Theta_t^\top V_a^t\left(s\right)^{-1}\Theta_t}$ shown in \eqref{eq:bound_assumption_for_elliptic}, the conditions for Lemma 19.4 in \cite{lattimore2020bandit} apply. According to Lemma 19.4, the elliptic potential lemma, if $\left\Vert \Theta_t \right\Vert_2^2 \leq L_{\Theta}\left(\delta\right)$ for any time $t = 1,\dots,n$, and $\lambda > 0$, then the following inequality applies
\begin{multline}
    \sum_{t \in \mathscr{T}_a} \min\left\{1,\Theta_t^\top V_a^t\left(s\right)^{-1}\Theta_t\right\}\leq \\ 2\left(ms+1\right)\log\left(\frac{\lambda\left(ms+1\right) + N_a L_{\Theta}\left(\delta\right)}{\lambda\left(ms+1\right)}\right)
\end{multline}
\begin{multline}
    \Rightarrow \overline{B}_{\theta}^\delta\left(n\right)\sqrt{N_a} \sqrt{\sum_{t \in \mathscr{T}_a} \min\left\{1,\Theta_t^\top V_a^t\left(s\right)^{-1}\Theta_t\right\}} \leq \\ \sqrt{N_a}\overline{B}_{\theta}^\delta\left(n\right)\overline{B}_u^\delta\left(N_a\right) \nonumber, 
\end{multline}
where $\overline{B}_u^\delta\left(N_a\right)$ is defined to be \eqref{eq:b_u_bound}. Based on the equations above, the bounds for the sum
\begin{equation*}
    \sum_{t \in \mathscr{T}_a} \sqrt{\Theta_t^\top V_a^t\left(s\right)^{-1}\Theta_t},
\end{equation*}
and instantaneous regret are 
\begin{equation}
    \sum_{t \in \mathscr{T}_a} \sqrt{\Theta_t^\top V_a^t\left(s\right)^{-1}\Theta_t} \leq \sqrt{N_a}\overline{B}_{\theta}^\delta\left(n\right)\overline{B}_u^\delta\left(N_a\right) \nonumber, 
\end{equation}
\begin{multline}
    \Rightarrow \sum_{t=1}^n r_t^{(\mathbf{b})} \leq \sum_{a \in [k]}\sum_{t \in \mathscr{T}_a} \beta_{a^*}^t - \beta_a^t + \varepsilon_{a^*}^t - \varepsilon_a^t + \\ \sum_{a \in [k]} \sqrt{N_a}\overline{B}_{\theta}^\delta\left(n\right)\overline{B}_u^\delta\left(N_a\right)\left(\overline{B}_{(\mathbf{b})}^\delta\left(N_a\right) + \overline{\mathbf{B}}_{(\mathbf{b})}^\delta\left(N_a\right) \right)\nonumber. 
\end{multline}

For what regards the term $\sum_{t=1}^n \beta_{a^*}^t - \beta_a^t$ in \eqref{eq:inequality_regret_1} and using Lemma \ref{lemma:bias_error}, it follows from \eqref{eq: beta_bound} that the following inequality holds with a probability of at least $1-2\delta$:
\begin{equation}
\sum_{t\in \mathscr{T}_a} \beta_{a^*}^t - \beta_a^t \leq 2N_aB_{\beta}\left(s\right)\sqrt{g_{\Sigma_{z_t}}\left(\delta/N_a\right)}.
\end{equation}

Finally, the residual sum $\sum_{t=1}^n \varepsilon_{a^*}^t - \varepsilon_a^t$ is $\sqrt{N_a} B_R$-subgaussian such that with a probability of at least $1-\delta$, the following inequality holds:
\begin{equation}\label{eq:UpperboundEpsilon}
\sum_{t\in \mathscr{T}_a} \varepsilon_{a^*}^t - \varepsilon_a^t \leq \sqrt{4N_a B_R^2 \log\left(1/\delta\right)}. 
\end{equation}

According to \eqref{eq:UpperBoundb1}-\eqref{eq:UpperboundEpsilon}, the following bound for regret that is satisfied with a probability of at least $1 - 9\delta$:
\begin{multline}\label{eq:regret_bound_action_based2}
    R_n^{(\mathbf{b})} \leq \\
    \sum_{a \in [k]}\sqrt{N_a}\overline{B}_{\theta}^\delta\left(n\right)\overline{B}_u^\delta\left(N_a\right)\left(\overline{B}_{(\mathbf{b})}^\delta\left(N_a\right) + \overline{\mathbf{B}}_{(\mathbf{b})}^\delta\left(N_a\right) \right) \\ + 2N_aB_{\beta}\left(s\right)\sqrt{g_{\Sigma_{z_t}}\left(\delta/N_a\right)}  + \sqrt{4N_a B_R^2 \log\left(1/\delta\right)}, 
\end{multline}

Setting $N_a \rightarrow n$, we arrive to the final bound for regret \eqref{eq:regret_bound_OFU} which is satisfied with a probability of at least $1 - 9\delta$. Based on \eqref{eq:regret_bound_OFU}, the upper bound for regret increases at the following rate \eqref{eq:regret_rate_increase} with a probability of at least $1 - 9\delta$. The term $\mathcal{O}\left(k\sqrt{(ms+1)n\log^3(n)}\right)$ is from model error $\hat{G}_a^t\left(s\right) - G_a\left(s\right)$ impacted by $\varepsilon_a^t$, $\mathcal{O}\left(kn\sqrt{(ms+1)\log^3(n)}\right)$ is from model error $\hat{G}_a^t\left(s\right) - G_a\left(s\right)$ impacted by $\beta_a^t$, $\mathcal{O}\left(k\sqrt{n}\right)$ is from the noise $\varepsilon_a^t$, and $\mathcal{O}\left(kn \sqrt{\log(n)}\right)$ are from the impacts of the bias $\beta_a^t$. 

\end{proof}

\begin{corollary}\label{corollary:regret_bound_2}
    Assume that $n > 0$ is finite. Let actions $a \in [k]$ be selected based on \eqref{eq:perturbation_method} where $u_a^{(\mathbf{e})}$ is defined as \eqref{eq:u_a_e} and $\mathbf{e}_a^t\left(s\right)$ is defined as \eqref{eq:e_def}. Regret $\sum_{t=1}^n X_t^* - X_t$ has the following  upper-bound with a probability of at least $1- 9\delta$: 
    \begin{multline}\label{eq:regret_bound_OFU_e}
        R_n^{(\mathbf{e})} \leq k\sqrt{n}\overline{B}_{\theta}^\delta\left(n\right)\overline{B}_u^\delta\left(n\right) \left(\overline{B}_{(\mathbf{b})}^\delta\left(n\right) + \overline{B}_{(\mathbf{e})}^\delta\left(n\right)\right)\\ + \sqrt{n\left(ms+1\right)N_a g_{\Sigma_{z_t}}\left(\delta/n\right)} B_s\left(n\right) \\ + 2 kn B_{\beta}\left(s\right)\sqrt{\log(n/\delta)} + k \sqrt{4 n B_R^2 \log(1/\delta)}, 
    \end{multline}
    where $\overline{B}_{(\mathbf{e})}^\delta\left(n\right)$ and $B_s\left(n\right)$ are defined as
    \begin{multline}\label{eq:large_e_bound_bold}
        \overline{B}_{(\mathbf{e})}^\delta\left(n\right)\triangleq \lambda B_G \sqrt{\frac{ms+1}{\lambda}}  \\ + \sqrt{2B_R^2\log\left(\frac{n}{\delta}\left(\frac{\lambda \left(ms+1\right) + n L_{\Theta}\left(\delta\right)}{\lambda \left(ms+1\right)}\right)^{\frac{ms+1}{2}}\right)},
    \end{multline}
    \begin{equation}\label{eq:s_bound_function}
        B_s\left(n\right) \triangleq n\sqrt{\frac{g_{\Sigma_{\Theta_t}}\left(\delta/n\right)}{\lambda}}. 
    \end{equation}

    Therefore, regret has an asymptotic rate of \eqref{eq:regret_rate_increase_no_bias}. Finally, if $s$ is set such that 
    \begin{equation}
        B_s\left(n\right)B_{\beta}\left(s\right) \leq \mathbf{B}_{\beta} , \label{eq:s_satisfactory}
    \end{equation}
    then regret has the following upper bound which is satisfied with a probability of at least $1-9\delta$
    \begin{multline}\label{eq:regret_bound_OFU_s}
        R_n^{(\mathbf{e})} \leq k\sqrt{n}\overline{B}_{\theta}^\delta\left(n\right)\overline{B}_u^\delta\left(n\right) \left(\overline{B}_{(\mathbf{b})}^\delta\left(n\right) + \overline{B}_{(\mathbf{e})}^\delta\left(n\right)\right)\\ + \sqrt{n\left(ms+1\right)g_{\Sigma_{z_t}}\left(\delta/N_{a^*}\right)}\mathbf{B}_{\beta} \\ + 2 kn B_{\beta}\left(s\right)\sqrt{g_{\Sigma_{z_t}}\left(\delta/N_{a^*}\right)} \\ + k\sqrt{4 n B_R^2 \log\left(1/\delta\right)}. 
    \end{multline}
\end{corollary}
\begin{proof}
    First, using Lemma \ref{lemma:optimal_value}, the following inequality is satisfied with a probability of $1-2\delta$:
    \begin{multline}
        G_{a^*}\left(s\right)^\top \Theta_t \overset{(a)}{\leq}  \hat{G}_{a^*}^t \left(s\right)^\top \Theta_t + u_{a^*}^{(\mathbf{e})}\\ + \sqrt{N_{a^*}\left(ms+1\right)g_{\Sigma_{z_t}}\left(\delta/N_{a^*}\right)}B_{\beta}\left(s\right)\sqrt{\Theta_t^\top V_{a^*}^t\left(s\right)^{-1}\Theta_t} \nonumber, 
    \end{multline}
    where in $(a)$ we used inequality \eqref{eq:UpperBoundb2} in Theorem \ref{theorem:regret_2}. Since action $a \in [k]$ is chosen instead of action $a^* \in [k]$, this implies that the following inequality is true
    \begin{equation}
        \hat{G}_{a^*}^t\left(s\right)^\top \Theta_t + u_{a^*}^{(\mathbf{e})}\leq  
        \hat{G}_{a}^t\left(s\right)^\top \Theta_t + u_{a}^{(\mathbf{e})}, \nonumber
    \end{equation}
    which implies that instantaneous regret has the following bound 
    \begin{multline}
        r_t^{(\mathbf{e})} \leq \hat{G}_{a}^t\left(s\right)^\top \Theta_t + u_{a}^{(\mathbf{e})} - G_a\left(s\right)^\top \Theta_t \\
        + \sqrt{N_{a^*}\left(ms+1\right)g_{\Sigma_{z_t}}\left(\delta/N_{a^*}\right)}B_{\beta}\left(s\right)\sqrt{\Theta_t^\top V_{a^*}^t\left(s\right)^{-1}\Theta_t} \\ + \beta_{a^*}^t - \beta_a^t + \varepsilon_{a^*}^t - \varepsilon_a^t. 
    \end{multline}
    
    For the terms $u_a^{(\mathbf{e})}$, $\hat{G}_{a}^t\left(s\right)^\top \Theta_t- G_a\left(s\right)^\top \Theta_t$, $\beta_{a^*}^t - \beta_a^t$, $\varepsilon_{a^*}^t - \varepsilon_a^t$, all their bounds are derived in Theorem \ref{theorem:regret_2}. However, for the term associated with $a^*$, we will use the worst case $V_a^t\left(s\right) \succeq \lambda I_{ms+1}$. Therefore, we have 
    \begin{align}
        & \sum_{a \in [k]}\sum_{t \in \mathscr{T}_a} \sqrt{\Theta_t^\top V_{a^*}^t\left(s\right)^{-1}\Theta_t} \nonumber \\
        & ~~~~~~~~~~~~~~~ \leq \sum_{a \in [k]}\sum_{t \in \mathscr{T}_a} \frac{\left\Vert \Theta_t\right\Vert_2}{\sqrt{\lambda}} \nonumber \\
        & ~~~~~~~~~~~~~~~ \overset{(b)}{\leq} \sum_{a \in [k]}\sum_{t \in \mathscr{T}_a} \sqrt{\frac{g_{\Sigma_{\Theta_t}}\left(\delta/n\right)}{\lambda}} \nonumber, 
    \end{align}
    \begin{equation}
        \Rightarrow \sum_{a \in [k]}\sum_{t \in \mathscr{T}_a} \sqrt{\Theta_t^\top V_{a^*}^t\left(s\right)^{-1}\Theta_t} \overset{(c)}{\leq} B_s\left(n\right) \nonumber, 
    \end{equation}
    where in $(b)$ we used Lemma \ref{lemma:subgaussian_half-normal} for $\left\Vert \Theta_t\right\Vert_2$ which is satisfied with a probability of at least $1-\delta$, in $(c)$ we used the fact that $\sum_{a \in [k]} \sum_{t \in \mathscr{T}_a} 1 = n$, and $B_s\left(n\right)$ is defined in \eqref{eq:s_bound_function}. Therefore, regret has the following upper bound that is satisfied with a probability of at least $1-9\delta$:
    \begin{multline}
        R_n^{(\mathbf{e})} \leq \sqrt{2N_{a^*}\left(ms+1\right)\log\left(N_{a^*}/\delta\right)}B_s\left(n\right)B_{\beta}\left(s\right) \\
        + \sum_{a \in [k]} \sqrt{N_a}\overline{B}_{\theta}^\delta\left(n\right)\overline{B}_u^\delta\left(N_a\right)\left(\overline{B}_{(\mathbf{b})}^\delta\left(n\right) + \overline{B}_{(\mathbf{e})}^\delta\left(n\right)\right)  \\ + 2 N_a B_{\beta}\left(s\right)\sqrt{g_{\Sigma_{z_t}}\left(\delta/N_a\right)} + \sqrt{4 N_a B_R^2 \log\left(1/\delta\right)}. 
    \end{multline}

    Setting $N_{a^*},N_a \rightarrow n$, we arrive at the regret bound \eqref{eq:regret_bound_OFU_e} which is satisfied with a probability of at least $1 -9\delta$. If we set $s$ such that inequality \eqref{eq:s_satisfactory} is satisfied, then regret has the upper bound \eqref{eq:regret_bound_OFU_s} which is satisfied with a probability of at least $1-9\delta$. Based on the set $s$, regret has an asymptotic rate of \eqref{eq:regret_rate_increase_no_bias}.     
\end{proof}

\section{Proof for Theorem \ref{theorem:s_cost}}\label{appendix:s_cost}

\begin{proof}
Using \eqref{eq:equality1}, we can express the term $G_a\left(s\right)^\top \Theta_t$ in \eqref{eq:Matrix_Form_2} as follows:
\begin{multline}\label{eq:reward_expression_cost_fun}
    X_t = \hat{G}_a^t \left(s\right)^\top \Theta_t + \left(\lambda G_a\left(s\right) - \mathbf{E}_{\mathscr{T}_a} \mathbf{O}_{\mathscr{T}_a}^\top\right) V_a^t\left(s\right)^{-1}\Theta_t \\ - \left(\mathbf{B}_{\mathscr{T}_a} \mathbf{O}_{\mathscr{T}_a}^\top\right) V_a^t\left(s\right)^{-1}\Theta_t + \beta_a^t + \varepsilon_a^t. 
\end{multline} 

Let there be the term $\nu \in [0,1]$. Using $\nu$, rearranging the terms in \eqref{eq:reward_expression_cost_fun}, and using the triangle inequality provides the following inequality:
\begin{multline} 
     X_t - \hat{G}_a^t  \left( s\right)^\top \Theta_t  -  \nu\left(\lambda G_a\left(s\right) - \mathbf{E}_{\mathscr{T}_a} \mathbf{O}_{\mathscr{T}_a}^\top \right) V_a^t\left(s\right)^{-1}\Theta_t  \\ - \varepsilon_a^t = \left( \left(1-\nu\right)\left(\lambda G_a\left(s\right)  - \mathbf{E}_{\mathscr{T}_a}\mathbf{O}_{\mathscr{T}_a} ^\top \right)\right) V_a^t\left(s\right)^{-1} \Theta_t\\  - \mathbf{B}_{\mathscr{T}_a} \mathbf{O}_{\mathscr{T}_a}^\top V_a^t\left(s\right)^{-1}\Theta_t + \beta_a^t, \nonumber
\end{multline}
\begin{multline}
     \Rightarrow \left\vert X_t - \hat{G}_a^t \left(s\right)^\top \Theta_t \right\vert \\
     + \nu\left\vert \left(\lambda G_a\left(s\right) - \mathbf{E}_{\mathscr{T}_a} \mathbf{O}_{\mathscr{T}_a}^\top\right) V_a^t\left(s\right)^{-1}\Theta_t \right\vert + \left\vert \varepsilon_a^t \right\vert \\ \geq \Bigg\vert \left(\left(1-\nu\right)\left(\lambda G_a\left(s\right) - \mathbf{E}_{\mathscr{T}_a}\mathbf{O}_{\mathscr{T}_a}^\top\right)\right) V_a^t\left(s\right)^{-1}\Theta_t\\ - \mathbf{B}_{\mathscr{T}_a} \mathbf{O}_{\mathscr{T}_a}^\top V_a^t\left(s\right)^{-1}\Theta_t + \beta_a^t \Bigg\vert, \nonumber
\end{multline} 


The upper bound of the term
\begin{equation*}
    \left\vert \left(\lambda G_a\left(s\right) - \mathbf{E}_{\mathscr{T}_a} \mathbf{O}_{\mathscr{T}_a}^\top\right) V_a^t\left(s\right)^{-1}\Theta_t \right\vert,
\end{equation*}
is the solution of the following optimization problem that is satisfied with a probability of at least $1-\delta$
\begin{equation}\label{eq:minmax_opt}
    \begin{array}{cc}
        \underset{\tilde{G} \in \mathbb{R}^{(ms+1)\times 1}}{\max} & \tilde{G}^\top \Theta_t - \hat{G}_a^t\left(s\right)^\top \Theta_t \nonumber \\
        \mbox{ s.t. } & \tilde{G} \in \mathscr{E}_{a,s}
    \end{array}, 
\end{equation} 
where $\mathscr{E}_{a,s}$ is defined to be 
\begin{align}
    \mathscr{E}_{a,s}& \triangleq \left\{\left\Vert \hat{G}_a^t\left(s\right) - \Tilde{G}\right\Vert_{V_a^t\left(s\right)} \leq \mathbf{e}_a^t\left(\delta,s\right)\right\}. 
\end{align}
 
Using Lemma \ref{lemma:optimal_value}, the solution of \eqref{eq:minmax_opt} is 
\begin{equation}
    \mathbf{e}_a^t\left(\delta,s\right)\sqrt{\Theta_t^\top V_a^t\left(s\right)^{-1}\Theta_t} \nonumber. 
\end{equation}

Therefore, no further effort is need to prove that the inequality \eqref{eq:upper_bound_2} is satisfied with the probability of at least $1-\delta$; this completes the proof. 
\end{proof}

\section{Proof for Theorem \ref{theorem:lower_bound_discrete}}\label{appendix:lower_bound}

\begin{proof}
    Let there be the expected regret $R_n$ found in \eqref{eq:regret}. 
    \begin{align}
        R_n & = \sum_{t=1}^n \mathbb{E}\left[X_t^* - X_t\right] \nonumber \\
        & = \sum_{t=1}^n \mathbb{E}\left[\left\langle c_{a_t^*} - c_{a_t} , z_t \right\rangle \mid \left\langle c_a - c_{a'}, z_t \right\rangle \geq 0, a_t^* = a\right] \nonumber, 
    \end{align}
    where in $(a)$ we used the Markov inequality which is satisfied with a probability of at most $\delta$ where $\delta \in (0,1)$. 
    \begin{multline}\label{eq:regret_breakdown}
        \Rightarrow R_n \overset{(a)}{=} \\ \sum_{t=1}^n \sum_{c_a,c_{a'} \in \mathcal{A}} \mathbb{E}_{z_t}\left[\left\langle c_{a_t^*} - c_{a_t} , z_t \right\rangle \mid a_t^*, a_t\right] \\ \cdot P\left(a_t = a'\mid a_t^* = a\right), 
    \end{multline}
    where in $(a)$ we used the Law of Total of Expectation. Let us assume at round $t$ that the action selected by the \textit{Oracle} \eqref{eq:Kalman_Oracle} is $a \in \mathcal{A}$. We want to find the probability that the \textit{Oracle} \eqref{eq:Kalman_Oracle} chooses an action $c_{a'} \in \mathcal{A}$ such that $a' \neq a$. The event of this occurring is based on the following sets
    \begin{align}
        \mathcal{E}_t^a & \triangleq \cap_{c_{a'} \in \mathcal{A}} \left\{\left\langle c_a - c_{a'}, z_t\right\rangle > 0\right\} \nonumber \\
        \Tilde{\mathcal{E}}_t^a & \triangleq \cap_{c_{a'} \in \mathcal{A}} \left\{\left\langle c_a - c_{a'}, \Tilde{z}_t\right\rangle > 0\right\} \nonumber . 
    \end{align}

    Next, we want to find the distribution of 
    \begin{equation}\label{eq:dist_opt}
        \left(\left\langle c_a - c_{a'}, \Tilde{z}_t\right\rangle,\left\langle c_a - c_{a'}, z_t\right\rangle\right). 
    \end{equation}
    $\left(\left\langle c_a - c_{a'}, \Tilde{z}_t\right\rangle,\left\langle c_a - c_{a'}, z_t\right\rangle\right)$. Recall that in the Kalman filter the state $z_t$ can be expressed as $z_t = \Tilde{z}_t+\Tilde{e}_t$. Therefore the joint distribution of \eqref{eq:dist_opt} is based on the following bounds of $P\left(\mathcal{E}_t^{a_j}\mid \Tilde{\mathcal{E}}_t^{a_i} \right)$:
    \begin{align}
        & P\left(\mathcal{E}_t^{a_j}\mid \Tilde{\mathcal{E}}_t^{a_i} \right) \nonumber \\
        & ~~ = \int_{\mathbb{R}_{+}^{k-1}}\int_{\mathbb{R}_{+}^{k-1}} P\left(A_i \Tilde{z}_t + \Pi_{i|j} \zeta = \Tilde{\zeta} \mid A_j z_t = \zeta\right)  d\zeta d\Tilde{\zeta} \nonumber \\
        & ~~ = \int_{\mathbb{R}_{+}^{k-1}}\int_{\mathbb{R}_{+}^{k-1}} \frac{\exp\left(-\frac{\left(\Tilde{\zeta} - \Pi_{i|j} \zeta\right)^\top \Tilde{\Sigma}_{i|j}^{-1}\left(\Tilde{\zeta} - \Pi_{i|j} \zeta \right)  }{2}\right)}{\sqrt{\left(2\pi\right)^{2k-2} \left\vert \Tilde{\Sigma}_{i|j}\right\vert }} d\zeta d\Tilde{\zeta} \nonumber \\
        & ~~ \overset{(b)}{=} \int_{\mathbb{R}_{+}^{2k-2}}\frac{\exp\left(-\frac{1}{2}\Vec{\zeta}^\top \Psi_{i|j}   \Vec{\zeta}\right)}{\sqrt{\left(2\pi\right)^{2k-2} \left\vert \Tilde{\Sigma}_{i|j}\right\vert }} d\Vec{\zeta} \nonumber \\
        & ~~ = \int_{\mathbb{R}_{+}^{2k-2}}\frac{\exp\left(-\frac{1}{2}\mbox{tr}\left(\Vec{\zeta}\Vec{\zeta}^\top \Psi_{i|j}   \right)\right)}{\sqrt{\left(2\pi\right)^{2k-2} \left\vert \Tilde{\Sigma}_{i|j}\right\vert }} d\Vec{\zeta} \nonumber \\
        & ~~ \geq \int_{\mathbb{R}_{+}^{2k-2}}\frac{\exp\left(-\frac{1}{2}\mbox{tr}\left(\Vec{\zeta}\Vec{\zeta}^\top \right) \mbox{tr}\left(\Psi_{i|j}\right)   \right)}{\sqrt{\left(2\pi\right)^{2k-2} \left\vert \Tilde{\Sigma}_{i|j}\right\vert }} d\Vec{\zeta} \nonumber \\
        & ~~ = \frac{\prod_{s=1}^{2k-2}\int_0^\infty \exp\left(-\frac{\Vec{\zeta}[s]^2}{2\mbox{tr}\left(\Psi_{i|j}\right)^{-1}} \right) d\Vec{\zeta}[s]}{\sqrt{\left(2\pi\right)^{2k-2} \left\vert \Tilde{\Sigma}_{i|j}\right\vert }}  \nonumber \\
        & ~~ = \frac{\prod_{s=1}^{2k-2} \sqrt{\frac{2\pi}{\mbox{tr}\left(\Psi_{i|j}\right)}}}{\sqrt{\left(2\pi\right)^{2k-2} \left\vert \Tilde{\Sigma}_{i|j}\right\vert }}  \nonumber, 
    \end{align}
    \begin{equation}
        \Rightarrow P\left(\mathcal{E}_t^{a_j}\mid \Tilde{\mathcal{E}}_t^{a_i} \right) \geq \frac{1}{\sqrt{\mbox{tr}\left(\Psi_{i|j}\right)^{2k-2} \left\vert \Tilde{\Sigma}_{i|j}\right\vert }} \nonumber ,
    \end{equation}
    where in $(b)$ we replaced $\tilde{\Sigma}_{i|j}$ with $\Psi_{i|j}$. Finally, we need the expectation $\mathbb{E}_{z_t}\left[\left\langle c_{a_t^*} - c_{a_t} , z_t \right\rangle \mid a_t^*, a_t\right]$. We know that based on the definition of $a_t^*$, $\left\langle c_{a_t^*} - c_{a_t} , z_t \right\rangle > 0$. We also know that $z_t$ is a normally distributed random variable $z_t \sim \mathcal{N}\left(\mathbf{0},Z\right)$ where $Z = \Gamma Z \Gamma^\top + Q$. Therefore, the conditional expectation is 
    \begin{multline}
        \mathbb{E}_{z_t}\left[\left\langle c_{a_t^*} - c_{a_t} , z_t \right\rangle \mid a_t^*, a_t\right] \\ = \sqrt{\frac{2\left(c_{a_t^*} - c_{a_t} \right)^\top Z \left(c_{a_t^*} - c_{a_t} \right)}{\pi}}\nonumber. 
    \end{multline}

    Therefore, regret for the \textit{Kalman Oracle} is \eqref{eq:regret_lower_bound_discrete}. 

\end{proof}

\bibliographystyle{IEEEtran}
\bibliography{IEEEabrv,autosam}{}

\end{document}